\newcommand{\newreptheorem}[2]{\newtheorem*{rep@#1}{\rep@title}\newenvironment{rep#1}[1]{\def\rep@title{#2 \ref*{##1}}\begin{rep@#1}}{\end{rep@#1}}}
\newtheorem{theorem}{Theorem}
\newtheorem{algo}[theorem]{Algorithm}
\newtheorem{lemma}[theorem]{Lemma}
\newtheorem{definition}[theorem]{Definition}
\newtheorem{corollary}[theorem]{Corollary}
\newtheorem{problem}[theorem]{Problem}
\newcommand{\eq}[1]{(\ref{eq:#1})}
\newcommand{\fig}[1]{\ref{fig:#1}}
\newcommand{\TII}{\affiliation{Quantum Research Center, Technology Innovation Institute, Abu Dhabi, UAE}}
\begin{document}

\title{Compilation-informed probabilistic logical-error cancellation}

\author{Giancarlo Camilo}
\author{Thiago O. Maciel}
\author{Allan Tosta}
\author{Abdulla Alhajri}
\author{Thais de Lima Silva}
\TII
\author{Daniel Stilck Fran\c{c}a}
\affiliation{Department of Mathematical Sciences, University of Copenhagen, Universitetsparken 5, 2100 Denmark}
\author{Leandro Aolita}
\TII

\begin{abstract}
The potential of quantum computers to outperform classical ones in  practically useful tasks remains challenging in the near term due to scaling limitations and high error rates of current quantum hardware. 
While quantum error correction (QEC) offers a clear path towards fault tolerance, overcoming the scalability issues will take time. 
Early applications will likely rely on QEC combined with quantum error mitigation (QEM). 
We introduce a QEM scheme against both compilation errors and logical-gate noise that is circuit-, QEC code-, and compiler-agnostic. The scheme builds on quasi-probability methods and uses information about the circuit's gates' compilations to attain an unbiased estimation of noiseless expectation values incurring a constant sample-complexity overhead. 
Moreover, it features maximal circuit size and code distance both independent of the target precision, in contrast to strategies based on QEC alone. 
We formulate the mitigation procedure as a linear program, demonstrate its efficacy through numerical simulations, and illustrate it for estimating the Jones polynomials of knots. 
Our method significantly reduces quantum resource requirements for high-precision estimations, offering a practical route towards fault-tolerant quantum computation with precision-independent overheads for fixed circuit complexity and code distance.   
\end{abstract}

\maketitle

\paragraph*{Introduction.} 
Quantum computers have the potential to exponentially outperform classical computers in certain tasks \cite{dalzell2023quantumalgorithmssurveyapplications}. 
These quantum advantages presuppose deep and fault-tolerant circuits on a large number of qubits: current estimates are hundreds of logical qubits at $10^{-6}$ error rates for scientific applications, and thousands at $10^{-12}$ rates for industrial applications~\cite{beverland2022assessingrequirementsscalepractical}. 
However, real-world devices are inherently noisy, with current error rates well above those. 
In principle, these error rates can be reduced via quantum error-correction (QEC) techniques by using multiple physical qubits to encode each logical one \cite{gottesman2009introductionquantumerrorcorrection}. Moreover, this comes also with an increase in circuit depth, as the circuit has to be compiled into the universal gate set of the code. 

One of the cornerstones of quantum computing is that QEC and compilation overheads are modest: both width \cite{Aharonov2008} and depth \cite{dawson2005solovaykitaevalgorithm} are increased only by polylog factors in the number of qubits, physical noise rate and compilation error, respectively (for certain quantum LDPC codes also logarithmic in the number of logical qubits \cite{Breuckmann_2021}). 
Yet, in practice, such overheads can cause years of delay for implementations. 
These considerations led to the development of various tools to reduce the resource requirements of quantum algorithms, including quantum error mitigation (QEM) techniques \cite{Cai_2023} and early fault-tolerant (EFT) hybrid algorithms. 
In QEM, the goal is to reduce the impact of noise on expectation values without actually correcting the errors by running noisy quantum circuits multiple times and classically post-processing the outcomes. 
In turn, EFT algorithms aim at de-quantizing (\emph{e.g.}, via randomization \cite{Lin_2022, Campbell2022, Tosta_2024, Wang2024}) certain components of a quantum algorithm and reserving quantum hardware strictly to crucial sub-routines. 
Both approaches incur an overhead in the statistical sample complexity due to the need for (possibly high-precision) estimates of random variables on quantum hardware. 
 
Given the limitations of near-term quantum hardware, early applications will likely rely on simple QEC codes with a moderate qubit overhead followed by QEM at the level of the encoded qubits \cite{zimborás2025mythsquantumcomputationfault,eisert2025mindgapsfraughtroad}. 
This approach was explored in \cite{Piveteau_2021,Lostaglio_2021}, where encoded Clifford gates were protected using QEC while $T$-gate noise was mitigated using the probabilistic error-cancellation (PEC) method \cite{Temme_2017, Mari_2021}. 
The resulting sample overhead scales exponentially with the number of $T$ gates. 
Similarly, \cite{suzuki_quantum_2022} used PEC with the inversion method to mitigate decoding and compilation errors, which incurs also an overhead in gate complexity. 
The same setting has also been explored for probabilistically synthesizing continuous logical quantum operations \cite{Koczor_2024_sparse, Koczor_2024_interpolation,akibue2024}, which again requires a significant sample overhead.

\begin{figure}[ht]
    \includegraphics[width=\columnwidth]{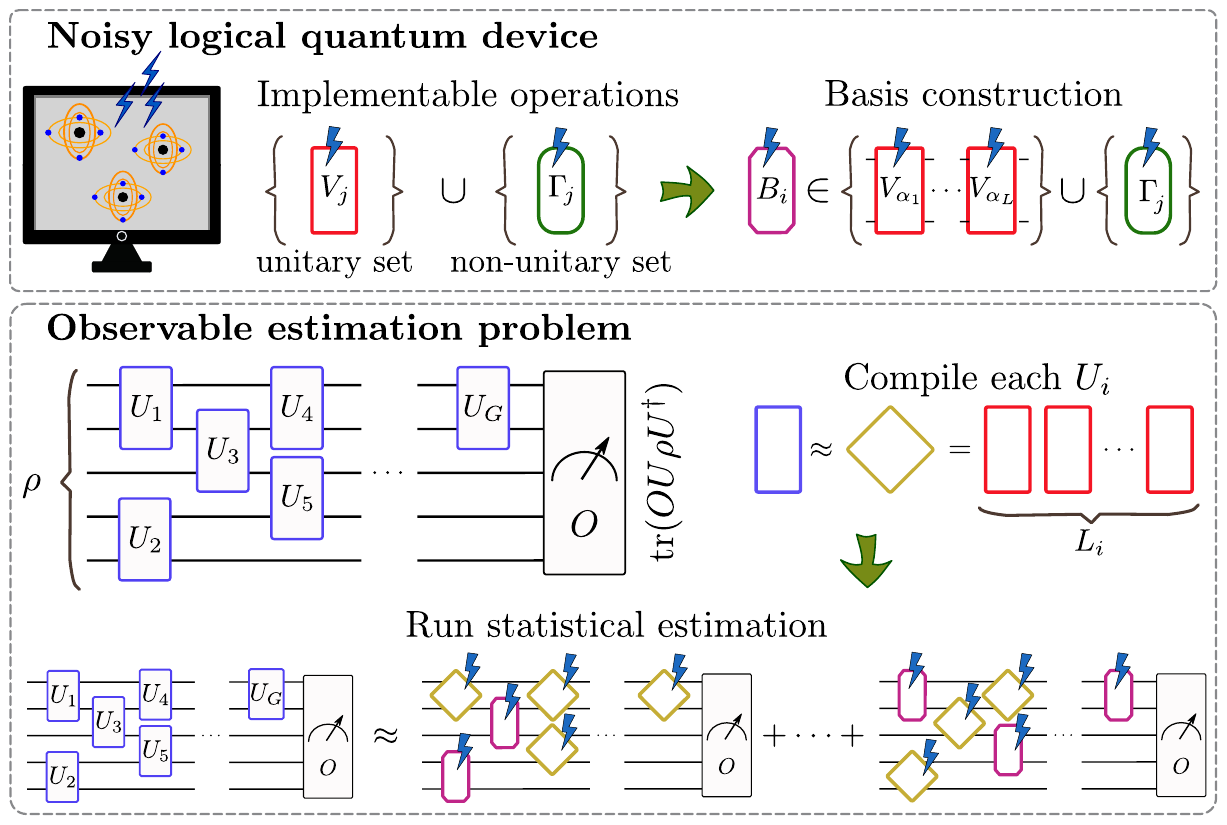}
    \caption{\textbf{Compilation-informed probabilistic error cancellation.} Upper panel: Given a characterized noisy logical quantum device, one builds a basis for the space of 2-qubit unitary channels, which includes sequences (pink boxes) of noisy logical operations (red boxes) from a universal set. 
    Lower panel: To find an $\varepsilon$-precise estimate of the expectation value $\langle O\rangle$ of an observable $O$ on an input state $\rho$ evolved by a quantum circuit $U$, each 2-qubit gate $U_i$ (blue boxes) in $U$ is first compiled up to $\varepsilon$-independent precision into a sequence (golden lozenges) of ideal unitary gates from the universal set and then decomposed as an affine combination of basis channels and the noisy version of that compilation sequence. 
    The decomposition coefficients are treated as quasi-probabilities, with the noisy compiled sequence being the dominant term. 
    $\langle O\rangle$ is then estimated statistically by Monte-Carlo sampling noisy circuits from the distribution defined by them. 
    As a result, each circuit's depth and the required QEC code-distance are both independent of $\varepsilon$. 
    }
\end{figure}

In this Letter, we propose {\it compilation-informed probabilistic error cancellation} (CIPEC), a logical-error mitigation scheme that simultaneously removes biases from compilation errors $\epsilon_{c}$ and logical-gate errors $\epsilon_{\mathtt{Q}}$ in expectation-value estimates. 
The scheme requires the characterization of noisy logical gates up to an accuracy proportional to the target one, but is agnostic to circuit, basis, compiler, and QEC code. 
Our scheme exploits a quasi-probability decomposition similar to the compensation method of PEC \cite{endo2018errormitigation}, where unitary gates are represented as affine combinations of gate sequences drawn from the noisy native set, but at the level of logical qubits relative to a QEC code instead of physical ones. 
For a fixed code distance and a circuit composed of $G$ two-qubit gates, by requiring $\epsilon_{c}$ to be $\mathcal{O}(1/G)$ (independent of $\varepsilon$),  CIPEC estimates expectation values to arbitrary precision $\varepsilon$ incurring a constant sample-complexity overhead. 
In other words, by combining standard QEC and compilation methods with quasi-probabilities, it is possible to make fault-tolerance overheads depend only on the circuit size, \emph{not the target precision.} 
This resource reduction can be instrumental to unlock useful applications on EFT hardware, specially for high-precision estimations. 
Moreover, we present linear programs to find optimal quasi-probability decompositions of unitary channels in our framework efficiently and illustrate the method with numerical experiments for Jones polynomial estimations \cite{laakkonen_less_2025}. 

In addition, we show that, for realistic target precisions, CIPEC allows one to solve problem instances orders of magnitude larger than those achievable using only QEC (see Fig. \ref{fig:depthmax}). 
We also provide explicit estimates of scaling factors for compilation errors in diamond norm (see App. \ref{app:c1c2}) and worst-case negativity of random 2-qubit gates with respect to three different basis of noisy channels (see Tab. \ref{tab:bases}).

\paragraph*{Notation.} Given a unitary  operator $U$, we denote by $\mathcal{U}$ the corresponding unitary channel, acting on a state $\rho$ as $\mathcal{U}(\rho):= U\rho\,U^\dagger$, and by $\widetilde{\mathcal{U}}$ a noisy channel approximating $\mathcal{U}$. 
More generally, $\widetilde{\Lambda}$ denotes the noisy realization of any quantum channel $\Lambda$.
Errors in quantum channels are quantified by the diamond norm $\|\Lambda\|_\diamond$, and the composition of $L$ channels $\Lambda_L\circ\cdots\circ\Lambda_1$ is denoted as $\prod_{i=1}^L \Lambda_i$. 
We also use $\|\boldsymbol{a}\|_1=\sum_{\alpha}|a_\alpha|$ for the $\ell_1$-norm of a vector $\boldsymbol{a}$, $[K]:=\{1,\ldots,K\}$ for the list of positive integers up to $K$, and $\vert A\vert$ for the cardinality of a set $A$.

\vspace{.2cm}

\paragraph*{Setup.}
Our goal is to solve the following problem:

\begin{problem}[Expectation-value estimation]\label{problem}
Given a $n$-qubit state $\rho$, an observable $O$, and the description of a unitary $U$ in terms of $G$ two-qubit unitary gates $U_i$, estimate $\langle O \rangle\coloneq\trace(O\,U\rho\,U^\dagger)$ up to precision $\varepsilon>0$ with probability at least $1-\delta$. 
\end{problem}
\noindent For that, we assume access to a device with limited QEC capabilities that can perform imperfect operations at the logical level, as follows.

\begin{definition}[Noisy logical quantum device]\label{def:noisydevice}
 A noisy logical quantum device $\mathtt{Q}$ relative to a chosen QEC code consists of: 
\begin{enumerate}[leftmargin=*]
    \item  $n_p$ physical qubits encoding $n_\ell\le n_p$ logical qubits;
    \item a set $\widetilde{A}\coloneq\widetilde{V}\cup\widetilde{\Gamma}$ of noisy logical channels whose elements approximate ideal operations from a set $A\coloneq V\cup\Gamma$ of completely positive (CP) maps that contains a universal unitary gate set $V$ and non-unitary operations $\Gamma$.
    We refer to these as \emph{implementable operations} and denote by 
    \begin{equation}\label{eq:noisychannels}
        \epsilon_{\mathtt{Q}} \coloneq \max_{j \in [\vert\mathcal{A}\vert] } 
        \big\|\widetilde{\mathcal{A}}_j-\mathcal{A}_j\big\|_\diamond 
    \end{equation}
    their worst-case diamond-norm error.
\end{enumerate}
\end{definition}
\noindent The non-unitary operations $\Gamma$ may include, \emph{e.g.}, state preparation channels, state projectors, and projective measurements. As we shall see in the following, in the presence of noise these are needed to span the space of two-qubit unitaries. 
Without loss of generality, when considering the noisy version $\widetilde{\Gamma}$ of $\Gamma$ we assume that state preparation and measurement (SPAM) errors are negligible as they can always be incorporated by adding the relevant noisy measurement and noisy state-preparation channels to the set of implementable operations. 
The cost of logical gate characterization is addressed in App. \ref{app:gst}. 

Implicitly, Def.~\ref{def:noisydevice} assumes stationary gate-dependent errors, meaning that each $\widetilde{\mathcal{A}}_j$ is characterized by the ideal operation $\mathcal{A}_j$ alone, regardless of its position in a circuit or the gates preceding it. This restriction is not strictly necessary and the construction applies to more general noise models as long as they are characterized in diamond norm.  
The implementable logical operations include effective logical error channels $\mathcal{E}_j$, \emph{i.e.} $\widetilde{\mathcal{A}}_j=\mathcal{E}_j\circ \mathcal{A}_j$. 
Although Def.~\ref{def:noisydevice} assumes these errors are perfectly characterized, in App. \ref{app:stability} we discuss the effect of imperfect characterization. 
The $\mathcal{E}_j$ can appear as the result of physical gates and a few cycles of error correction that project the physical state back to the logical space. 
Moreover, we assume $\mathcal{E}_j$ to act locally in at most a small neighbourhood of the qubits acted on by $\mathcal{A}_j$. 
For the unitary gates in $V$, common physical error models manifest as stochastic Pauli noise in the logical level \cite{beale2018quantum}.

The set of universal gates $V$ is determined by the particular QEC code. Here, we focus on $V\coloneq\{I,H,S,S^\dagger,X,Y,Z,T\}^{\otimes2}\cup\{\text{CNOT}\}$, which is a typical gate set for stabilizer codes \cite{terhal2015quantum}. 
The QEC code also determines the relation between the error suppression and the number of physical and logical qubits. 
A popular example is that of distance-$d$ surface codes, for which (for a physical error rate $p$ below the code's threshold $p_{\text{th}}$) the logical error rate is suppressed exponentially in $d$ as $\epsilon_{\mathtt{Q}}\propto\,(p/p_{\text{th}})^{(d+1)/2}$ by employing $2d^2-1$ physical qubits \cite{beverland2022assessingrequirementsscalepractical} per logical one. 
Similar relations hold for other codes (\emph{e.g.}, quantum LDPC codes \cite{Bravyi_2024}). 
Notice that our requirements refer to diamond norm errors, while the gate error $p$ is frequently reported as the average gate infidelity, which can be efficiently estimated via randomized benchmarking. 
However, the two are related as $(1+1/m)\,p\leq\epsilon_{\diamond}\leq\sqrt{(m+1)\,m\,p}$ \cite{Sanders_2015}, where $m$ is the dimension of the space acted on by the error channel, here assumed to be composed of two qubits only. 
Moreover, the lower bound is saturated for stochastic Pauli noise \cite{Sanders_2015}, which is often assumed for obtaining the error threshold of QEC codes \cite{fowler_high_threshold_2012,Bravyi_2024} \footnote{Importantly, Pauli noise is known to be a good effective logical error model \cite{beale2018quantum}. 
Moreover, the expression $\epsilon_{\mathtt{Q}}\propto\,(p/p_{\text{th}})^{(d+1)/2}$ gives the error rate per EC cycle to store a qubit. 
In turn, the error per logical gate can be approximated as the number of cycles the gate takes to be implemented ($2d$ cycles for the CNOT and Hadamard gates, for example) times the error per cycle \cite{fowler_surface_2012,suzuki_quantum_2022}.  The proportionality factor depends on the device and code specifications, including the physical error model.}.

Importantly, in a scenario where the number of available physical qubits is limited, logical errors cannot be arbitrarily suppressed. 
In solving Problem \ref{problem}, this implies the target precision $\varepsilon$ may not be achievable through QEC alone even if statistical fluctuations were absent, in which case combining error-correction with error mitigation methods can prove useful. 
We resort to probabilistic error cancellation (PEC), a technique proposed to mitigate errors in the context of NISQ devices \cite{Temme_2017,Takagi_2021}. 
Previous approaches to PEC employ the so called error inversion method \cite{suzuki_quantum_2022, Jin_2025}, where the inverse noise channels $\mathcal{E}_j^{-1}$ are decomposed as linear combinations of noisy implementable channels (assuming the noise is invertible) and applied after each noisy gate in the circuit, which may cause a significant increase in circuit depth. 
Here, instead, we consider the compensation method for PEC \cite{endo2018errormitigation}, in which case the ideal gates $U_i$ in the circuit are decomposed in a similar way. 
For that to be possible, one must ensure that the implementable operations span the space of all $2$-qubit unitary channels (as opposed to the inversion method, which in general requires the larger space of all completely positive trace-preserving maps). 
Notice that, for arbitrary noise models $\mathcal{E}_j$, composing the noisy unitary operations $\widetilde{V}_j$ alone may not suffice, and we need to include non-unitary operations. 
This is formalized in the following definition.

\begin{definition}[CIPEC-capable device]\label{def:feasible} 
Let $\mathtt{Q}$ be a noisy logical quantum device as in Def. \ref{def:noisydevice}, ${V}^{(D)}\subset \left\{\prod_{i=1}^D{\mathcal{V}}_{\alpha_i}\,\vert \,\alpha_1,\ldots,\alpha_D\in[\vert V\vert]\right\}$ be a set of unitary channels obtained by composing $D >0$ operations from the universal set $V$, and $\widetilde{V}^{(D)}$ be the implementable version of $V^{(D)}$. 
We say that $\mathtt{Q}$ is capable of running CIPEC if there exists $\widetilde{V}^{(D)}$ such that the span of $\widetilde{B}\coloneq\widetilde{V}^{(D)}\cup\,\widetilde{\Gamma}$ contains the subspace of $2$-qubit unitary channels on the logical space. 
We refer to $\widetilde{B}$ as the basis of implementable operations and $D$ as the maximum sequence length (or depth) of $\widetilde{B}$. 
\end{definition}

The basis $\widetilde{B}$ in Def.~\ref{def:feasible} can be overcomplete. 
In App. \ref{app:basis}, we discuss the effects of the basis size $\vert\widetilde{{B}}\vert$ and maximal sequence length $D$ on practical implementation. 
The following definition introduces the negativity, which accounts for the sample overhead of error mitigation, and its upper bound $c_*$, that will play an important role in the analysis of CIPEC.

\begin{definition}[Worst-case negativity] \label{def:negativity}
Given a basis $\widetilde{B}$ of implementable operations as in Def. \ref{def:feasible}, the negativity of a channel $\mathcal{C}\in\text{span}(\widetilde{B})$ in this basis is 
$\|\mathcal{C}\|_{\widetilde{B}} := \min_{b}\left\{\sum_{j}|b_{j}|
\text{  s.t. }\mathcal{C} = \sum_{j\in\vert \widetilde{B}\vert} b_j\,\widetilde{\mathcal{B}}_j\right\}$. 
We denote by $c_*$ the worst-case negativity over all the \emph{unitary} channels, $c_*\coloneq \max_{\norm{\mathcal{C}}_\diamond=1} \norm{\mathcal{C}}_{\widetilde{B}}$. 
\end{definition}
\noindent In App. \ref{app:basis} we prove that $c_*$ connects the negativity and the diamond norm, \emph{i.e.}, $\|\mathcal{C}\|_{\widetilde{B}}\leq c_*\|\mathcal{C}\|_\diamond$.

A minimal channel basis for two-qubit CP maps was proposed in \cite{endo2018errormitigation} using Clifford gates conjugated with trace non-increasing  channels $\pi_z(\rho)\coloneq\left(\frac{1+Z}{2}\right)\rho\left(\frac{1+Z}{2}\right)$ that project any state $\rho$ into the $\ketbra{0}{0}$ state, but these make it inconvenient for PEC due to the need for postselection. 
Here, we introduce two different bases whose span contain the space of CPTP maps and investigate their influence on the statistical overhead of PEC as measured by the constant $c_*$ of Def. \ref{def:negativity}. 
For each noisy basis $\widetilde{B}$, we empirically estimate $c_*$ by taking the worst-case negativity over an ensemble of $10^4$ Haar random unitaries. 
In particular, we explicitly construct bases from state preparation channels $\mathcal{P}_{\ket{\psi}}(\rho)\coloneq\ketbra{\psi}{\psi}$, which prepare the state $\ket{\psi}$ from any given $\rho$, and sequences of Clifford operations only, with no need for $T$ gates, so that all its unitary elements are classically simulable. 
We first construct a minimal basis of maximal sequence length $D=4$ using sequences of implementable gates and state preparations. 
From all possible sequences of implementable operations, we use a greedy search to choose $241$ linearly independent elements to form a minimal basis spanning the space of 2-qubit channels. 
We also investigate the advantage of using an overcomplete basis given by the full $2$-qubit Clifford group and state preparation channels, which reduces the negativity of the decomposition but uses longer sequences with $D=17$. 
The main features are summarized in Tab. \ref{tab:bases}. 
Since the runtime and performance guarantees of CIPEC depend on the chosen basis, an open question is whether one can go beyond the above heuristics and construct an optimal basis, \emph{i.e.}, one with the smallest value of $c_*$ while having the minimum number of elements.

\begin{table}[t]
    \renewcommand{\arraystretch}{1.3}
    \begin{tabularx}{\columnwidth}{@{\extracolsep{\fill}\hspace{0pt}}|c|cc|ccc|}
      \hline
      \textbf{$\widetilde{B}$} &  $\widetilde{V}^{(L)}$ &  $\widetilde{\Gamma}$ &  $\vert \widetilde{B}\vert$ & $D$ & $c_*$ \\
      \hline
      $\widetilde{B}_1$ & Clifford group & State prep. & $11535$ & $17$ & $4.47$ \\
      \hline
      $\widetilde{B}_2$ & Cliffords (min) & State prep. & $241$ & $4$ & $156.2$ \\
      \hline
      $\widetilde{B}_3$ & Cliffords (min) & State proj. & 256 & $10$ & $88.0$ \\
      \hline
    \end{tabularx}%
  \caption{\textbf{Three different bases for CIPEC.} 
  The noisy unitary basis elements $\widetilde{V}^{(L)}$ use only Clifford gates $\{I,H,S,S^\dagger,X,Y,Z,\text{CNOT}\}$. 
  The span of basis $\widetilde{B}_1$ contains the space of completely positive and trace-preserving (CPTP) maps for $2$-qubits and is overcomplete. It consists of the full $2$-qubit Clifford group plus state preparation channels $\widetilde{\Gamma}=\{I,\mathcal{P}_{\ket{+}},\mathcal{P}_{\ket{+y}},\mathcal{P}_{\ket{0}}\}^{\otimes2}$ that prepare the corresponding state $\ket{\psi}$ from any $\rho$. 
  The basis $\widetilde{B}_2$ uses the same state preparation channels $\widetilde{\Gamma}$, but with a minimal number of linearly independent compositions of Clifford gates necessary so its span contains the space of (CPTP)  maps, with $\text{dim}(\text{CPTP})=241$ \cite{github_cipec}. 
  The basis $\widetilde{B}_3$ was proposed in \cite{endo2018errormitigation} (see also \cite{Takagi_2021}) and is also minimal, using Clifford gates conjugated with trace non-increasing  channels $\pi_z$ that project $\rho$ into the $\ket{0}$ state. 
  Its span contains the space of CP maps, with $\text{dim}(\text{CP})=256$. 
  The reported values of $c_*$ assume a local depolarizing noise model after each operation with strengths $10^{-6}$ for products of single-qubit gates and $10^{-5}$ for the others, leading to $\epsilon_\mathtt{Q}=10^{-5}$.
  }
    \label{tab:bases}
\end{table}

\paragraph*{Main results.} 
Here we introduce a new algorithm, which we refer to as \emph{Compilation-Informed Probabilistic Error Cancellation} (CIPEC), to mitigate errors in the estimation of observables that remain after logical operations with only partial quantum error-correction. 
The idea is to statistically simulate the compiled circuit by using a hybrid classical/quantum procedure based on randomly sampling $j \in \vert \widetilde{B}\vert$ according to its importance in the quasi-probability distribution in Eq. \eqref{eq:lp} for each gate. 
The algorithm is described in Alg. \ref{alg:CIPEC} and resembles PEC with the compensation method \cite{suzuki_quantum_2022}, where the compensation term is given by the implementable version of the compilation of each gate $U_i$ in the circuit into the universal gate set $V$ realized by the device. 
Using this term to augment the basis $\widetilde{B}$ realized by a CIPEC-capable device $\texttt{Q}$ (see Def. \ref{def:feasible}), the negativity of the corresponding quasi-probability decomposition can be controlled. 
The framework is conceptually different from the standard setup of \cite{suzuki_quantum_2022}, which was designed for noisy-intermediate scale quantum (NISQ) devices and uses a direct noisy implementation of the desired gates $U_i$ on hardware (\emph{e.g.}, parametrized rotation gates) themselves as the compensation term. 
In our case, compilation and circuit noise are mitigated together with a single procedure. 
Moreover, while in NISQ devices error mitigation can be restricted to entangling gates, which are typically the noisiest operations \cite{calderon2017}, this is no longer valid in a fault-tolerant setting.
For instance, $T$ gates require gate teleportation and magic state distillation \cite{Gottesman1999}, which may result in effective noise levels comparable to those of $2$-qubit gates. 
Therefore, our method mitigates errors from all gates in the circuit.

\begin{algo}[CIPEC]\label{alg:CIPEC} \phantom{.}\\
\\
Input: $\{U_i\}_{i\in[G]}$, $O,\rho,\varepsilon,\delta$ as in Problem \ref{problem}; a CIPEC-capable device $\mathtt{Q}$ with basis $\widetilde{B}\coloneq\big\{\widetilde{\mathcal{B}}_j\big\}_{j\in[\vert \widetilde{B}\vert]}$, worst-case negativity $c_*$, and able to measure in the eigenbasis of $O$; and $\omega_1>1$. \\
\\
Output: $\overline{O}$ s.t. $\big\vert \overline{O}-\langle O\rangle\big\vert\le\varepsilon$ with prob. $1-\delta$.

\vspace{.21cm}

\begin{enumerate}[leftmargin=*]
    \item For each $i\in[G]$:
    \begin{enumerate}[leftmargin=4pt]
        \item 
        (Classical) compile $U_i$ up to diamond-norm error $\epsilon_{c,i}=\log(\omega_1)/(2c_*G)$ using a sequence of $L_i$ noiseless gates from the universal set $V$, and use the resulting unitary $\mathcal{V}^{(L_i)}_{c,i}$ to define the augmented noisy basis $\widetilde{B}^{i}\coloneq\widetilde{B}\,\cup\big\{\widetilde{\mathcal{V}}^{(L_{i})}_{c,i}\big\}$;
        \item
        (Classical) decompose $\mathcal{U}_i$ in the basis $\widetilde{B}^{i}$ by solving the following optimization problem 
        \begin{gather}
        \begin{aligned}\label{eq:lp}
            &\min_{\mathbf{b}_{i}\in \mathbb{R}^{\vert \widetilde{B} \vert}} \quad\gamma_i \coloneq 1+\norm{\mathbf{b}_i}_1
            \\
            \text{s.t.} &\left\{ \mathcal{U}_i = \widetilde{\mathcal{V}}_{c,i}^{(L_{i})}+\sum_{j\in[|\widetilde{B}|]}b_{i,j}\, \widetilde{\mathcal{B}}_{j}\, \right.;
        \end{aligned}
        \end{gather}
    \end{enumerate}
    
    \item Set $\gamma\coloneq\prod_{i\in[G]}\gamma_i$ (total negativity) and $M\coloneq\frac{1}{2}\gamma^2\log(2/\delta)\,\|O\|^2\varepsilon^{-2}$ (number of samples);
    \item For $s\in[M]$:
    \begin{enumerate}[leftmargin=4pt]
        \item (Classical) sample indices $j_{i,s}\in\big[\vert \widetilde{B}^i \vert\big]$ for each $i\in[G]$ according to the probability distribution $p_{i}(j)\coloneq \gamma_{i}^{-1}\vert b_{i,j}\vert$ defined by the solution to Eq. \eq{lp} (here we include also $b_{i,\vert \widetilde{B}\vert+1}\coloneq1$). Then compute the total sign $\sigma_{s}\coloneq 
        \prod_{i\in[G]} \mathrm{sgn}(b_{i,j_{i,s}})$;
        \item (Quantum) run the noisy circuit $\widetilde{\mathcal{C}}_{s}\coloneq \prod_{i\in[G]}\widetilde{\mathcal{B}}^{i}_{j_{i,s}}$ with $\rho$ as the initial state and measure $O$ on it. Using the measurement outcome $o_{s}\in\text{spectrum}(O)$, record a sample of the random variable $x_{s}\coloneq\gamma\,\sigma_{s}\,o_{s}$;
    \end{enumerate}
\end{enumerate}
{\bf Return:} $\overline{O}\coloneq\frac{1}{M}\sum^{M}_{s=1}x_{s}$ (empirical mean of $x_s$).   
\end{algo}

The optimization problem in Eq. \eq{lp} has $v=|\widetilde{B}|$ variables and $c=256$ constraints (for $2$-qubit gates) corresponding to the matrix entries of $U_i$ and can be solved using standard convex optimization solvers \cite{ben2001} in time $\text{poly}\big(v,c\big)=\text{poly}\big(|\widetilde{B}|\big)$. 
We used {\sc cvxpy} \cite{diamond2016cvxpy,agrawal2018rewriting} as the modeling interface to the {\sc mosek} \cite{mosek} solver. 
We chose to keep $b_{i,\vert \widetilde{B}\vert+1}\coloneq1$ out of the optimization variables since this allows simpler proofs when bounding the total negativity in Lemma \ref{lemma:negativity_ciPEC}, and a clear choice of compilation error in step {\it 1. (a)}. 
In our numerics, a solution in the largest basis $B_1$ of $\sim11k$ elements took about $5$ minutes on a standard laptop, while for bases $B_2$ and $B_3$ the time was negligible. 
For the compilation step, the method supports any two-qubit gate synthesis algorithm.  
Here we illustrate it for a strategy as follows: we first decompose $U_i$ into CNOT, $\sqrt{X}$, and $R_z$ gates, and then use {\sc{gridsynth}} \cite{ross2016} to synthesize each $R_z$ into Clifford + $T$ gates, from which we extract the sequence lengths $L_i$ appearing in Theorem \ref{thm:main}. This is described in detail in App. \ref{app:c1c2} where, as a side result which may be of independent interest, we also numerically estimate the average sequence length $L$ of this compilation strategy for Haar random $2$-qubit unitaries. 
We observe a Solovay-Kitaev-like polylog scaling $L=c_1\,\log^{c_2}\big(1/\epsilon_c\big)$ with the diamond-norm error $\epsilon_c$, where $c_1\approx210$ and $c_2\approx0.75$. 
Improved compilation strategies such as probabilistic synthesis \cite{akibue2024} are likely to yield better constants.

Remarkably, for a target precision $\varepsilon$, Alg. \ref{alg:CIPEC} solves Problem \ref{problem} with $\varepsilon$-independent logical circuit sizes and physical-qubit overhead. 
This is made possible by using a noisy compilation of each gate $U_i$  with $\varepsilon$-independent compilation error in Eq. \eqref{eq:lp}, which propagates to an $\varepsilon$-independent negativity $\gamma$ as shown by Lemma \ref{lemma:negativity_ciPEC} in App. \ref{app:proofs}. 
This is the core of our main theorem, proven in App. \ref{app:proofs}:

\begin{theorem}[CIPEC]\label{thm:main}
Let $\{U_i\}_{i\in[G]}, \varepsilon,\delta,\rho, O$ as in Problem \ref{problem},  $\mathtt{Q}$ be a CIPEC-capable device with worst-case error $\epsilon_{\mathtt{Q}}$ and worst-case negativity $c_*$, and $\omega_1,\omega_2>1$ be constants. 
Denote by $L_i$ the length of a logical-gate sequence that compiles $U_i$ up to diamond-norm error $\epsilon_{c,i}\le\log(\omega_1)/(2c_*G)$ into the universal set $V$ realized by $\mathtt{Q}$, and let $L \coloneq\sum_{i\in[G]}L_i$ be the total number of logical gates. 
Then, if $L \le L_{\text{max}} \coloneq \log(\omega_2)/(2c_*\epsilon_\mathtt{Q})$, Alg. \ref{alg:CIPEC} solves Problem \ref{problem} on $\mathtt{Q}$ for any target precision $1/\varepsilon$ with constant sample overhead $\gamma^2\le\omega_1\omega_2$, i.e., using at most $\frac{1}{2}\gamma^2\,\norm{O}^2\varepsilon^{-2}\log(2/\delta)$ samples.
\end{theorem}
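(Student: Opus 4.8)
The plan is to establish the theorem in three logically separated pieces: (i) an unbiasedness argument showing that the random variable $x_s$ estimates $\langle O\rangle$ in expectation; (ii) a negativity bound controlling the total overhead $\gamma$; and (iii) a concentration argument fixing the sample count $M$. The key structural input is that each gate $U_i$ is decomposed via Eq.~\eqref{eq:lp} as an affine combination over the augmented basis $\widetilde{B}^i$, with the noisy compilation $\widetilde{\mathcal{V}}^{(L_i)}_{c,i}$ carrying the fixed coefficient $1$ and the remaining coefficients $b_{i,j}$ small. I would first verify that the sampling distribution $p_i(j)=\gamma_i^{-1}|b_{i,j}|$ together with the sign factor $\sigma_s$ reproduces the signed affine decomposition: by linearity of the channels and independence of the per-gate samples, $\mathbb{E}[x_s]=\gamma\,\mathbb{E}[\sigma_s\,o_s]$ factorizes across the $G$ gates, and each factor collapses the sampled channel back to the exact decomposition $\mathcal{U}_i$, so that $\mathbb{E}[x_s]=\trace\!\big(O\,\prod_i\mathcal{U}_i(\rho)\big)=\langle O\rangle$. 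This step is essentially the standard PEC unbiasedness computation and should be routine once the bookkeeping of the factor $\gamma=\prod_i\gamma_i$ against the normalizations $\gamma_i^{-1}$ is done carefully.

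The substantive part is bounding $\gamma$. Here I would invoke Lemma~\ref{lemma:negativity_ciPEC} (stated as proved in App.~\ref{app:proofs}), which controls the per-gate negativity $\gamma_i=1+\|\mathbf{b}_i\|_1$ in terms of the compilation error $\epsilon_{c,i}$, the logical-gate error $\epsilon_{\mathtt{Q}}$, the sequence length $L_i$, and the worst-case negativity constant $c_*$. The mechanism I expect is a triangle-inequality decomposition: $\mathcal{U}_i$ differs from its \emph{noisy} compilation $\widetilde{\mathcal{V}}^{(L_i)}_{c,i}$ by two sources, the compilation error (bounded by $\epsilon_{c,i}$ in diamond norm) and the accumulated logical noise along the length-$L_i$ sequence (bounded by $L_i\,\epsilon_{\mathtt{Q}}$, using that diamond-norm errors add under composition). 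Since the residual $\mathcal{U}_i-\widetilde{\mathcal{V}}^{(L_i)}_{c,i}$ is the part expanded over $\widetilde{B}$, the negativity inequality $\|\cdot\|_{\widetilde{B}}\le c_*\|\cdot\|_\diamond$ from Def.~\ref{def:negativity} gives $\|\mathbf{b}_i\|_1\le c_*(\epsilon_{c,i}+L_i\epsilon_{\mathtt{Q}})$, whence $\gamma_i\le 1+c_*(\epsilon_{c,i}+L_i\epsilon_{\mathtt{Q}})\le e^{c_*(\epsilon_{c,i}+L_i\epsilon_{\mathtt{Q}})}$. Taking the product over $i$ and summing the exponents, the compilation contribution sums to $c_*\sum_i\epsilon_{c,i}\le c_*\,G\cdot\log(\gamma_1)/(2c_*G)=\tfrac{1}{2}\log\gamma_1$ by the choice $\epsilon_{c,i}\le\log(\gamma_1)/(2c_*G)$, while the noise contribution sums to $c_*\epsilon_{\mathtt{Q}}\sum_i L_i=c_*\epsilon_{\mathtt{Q}}L\le\tfrac{1}{2}\log\gamma_2$ precisely under the hypothesis $L\le L_{\max}=\log(\gamma_2)/(2c_*\epsilon_{\mathtt{Q}})$. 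Combining, $\gamma\le e^{\frac12\log\gamma_1+\frac12\log\gamma_2}=\sqrt{\gamma_1\gamma_2}$, giving $\gamma^2\le\gamma_1\gamma_2$ as claimed.

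Finally, for concentration I would bound the range of $x_s$: since $|o_s|\le\|O\|$ and $|\sigma_s|=1$, each sample satisfies $|x_s|\le\gamma\|O\|$, so the $x_s$ are i.i.d.\ bounded in an interval of width $2\gamma\|O\|$. Applying Hoeffding's inequality to the empirical mean $\overline{O}=\frac{1}{M}\sum_s x_s$ of the unbiased estimator gives $\Pr[\,|\overline{O}-\langle O\rangle|>\varepsilon\,]\le 2\exp\!\big(-M\varepsilon^2/(2\gamma^2\|O\|^2)\big)$, and setting the right-hand side to $\delta$ yields exactly $M=\tfrac{1}{2}\gamma^2\|O\|^2\varepsilon^{-2}\log(2/\delta)$, matching the sample count in Alg.~\ref{alg:CIPEC}. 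The main obstacle is the negativity bound of the second paragraph: care is needed to justify that the coefficient-$1$ term for $\widetilde{\mathcal{V}}^{(L_i)}_{c,i}$ is kept outside the optimization (so that the residual expanded over $\widetilde{B}$ is genuinely the full difference $\mathcal{U}_i-\widetilde{\mathcal{V}}^{(L_i)}_{c,i}$, making the diamond-to-negativity bound applicable), and that the per-gate errors indeed add subadditively under channel composition to produce the $L_i\epsilon_{\mathtt{Q}}$ term; these are exactly the points for which Lemma~\ref{lemma:negativity_ciPEC} is invoked.
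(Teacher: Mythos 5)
Your proposal takes essentially the same route as the paper's proof: (i) unbiasedness of the quasi-probability estimator built from Eq.~\eqref{eq:lp}, (ii) the negativity bound of Lemma~\ref{lemma:negativity_ciPEC}, whose mechanism you reconstruct exactly (Lemma~\ref{lemma:noisySK} for $\|\mathcal{U}_i-\widetilde{\mathcal{V}}^{(L_i)}_{c,i}\|_\diamond\le\epsilon_{c,i}+L_i\epsilon_{\mathtt{Q}}$, then $\|\cdot\|_{\widetilde{B}}\le c_*\|\cdot\|_\diamond$ from Lemma~\ref{lem:1-norm_diamond} applied to the residual, then $1+\lambda_i\le e^{\lambda_i}$, with the two hypotheses splitting $\lambda$ into $\tfrac12\log\gamma_1+\tfrac12\log\gamma_2$), and (iii) Hoeffding with statistical accuracy $\varepsilon/\gamma$.

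The only flaw is arithmetic, in the last step: from your own tail bound $2\exp\big(-M\varepsilon^2/(2\gamma^2\|O\|^2)\big)$ (which is Hoeffding for samples of range $2\gamma\|O\|$), setting the right-hand side to $\delta$ gives $M=2\gamma^2\|O\|^2\varepsilon^{-2}\log(2/\delta)$, not $\tfrac12\gamma^2\|O\|^2\varepsilon^{-2}\log(2/\delta)$; your claimed $M$ and your claimed exponent differ by a factor of $4$. The paper reaches the constant $\tfrac12$ by taking $\|O\|$ as the range of the measured outcome $o_s$ (so the variable $\sigma_s o_s$, estimated to accuracy $\varepsilon/\gamma$, has range $\|O\|$ and Hoeffding yields $M=\|O\|^2\log(2/\delta)/\big(2(\varepsilon/\gamma)^2\big)$). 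This affects only the prefactor of the sample complexity, not the structure of the argument nor the $\varepsilon$-independence of the overhead $\gamma^2\le\gamma_1\gamma_2$, but as written your derivation does not internally match the sample count of Alg.~\ref{alg:CIPEC} that you claim to recover.
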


This result unlocks the possibility of solving instances of Problem \ref{problem} that cannot be solved with QEC alone or with standard PEC under the same assumptions given a fixed budget of quantum resources. 
More precisely, with QEC alone, after compilation up to precision $\epsilon_c=\mathcal{O}(\varepsilon)$, the circuit has $L_\text{QEC}(\varepsilon)$ logical gates, each of which having noise strength at most $\epsilon_\mathtt{Q}$. 
Intuitively, one expects any estimation built out of this noisy circuit to succeed only as long as $L_\text{QEC}(\varepsilon)\,\epsilon_\mathtt{Q} =\mathcal{O}(\varepsilon)$ (see Lem. \ref{lemma:maxG_ECstrategy} in App. \ref{app:proofs} for a formal argument). 
To satisfy the latter, given a fixed physical error rate $p$, one should either restrict to logical-circuits of maximal size $L_\text{QEC}(\varepsilon)=\mathcal{O}(\varepsilon/\epsilon_\mathtt{Q})$ or scale the logical error rate down with $L_\text{QEC}(\varepsilon)$ as $\epsilon_\mathtt{Q}=\mathcal{O}(\varepsilon/L_\text{QEC}(\varepsilon))$ by increasing the code distance, and hence the physical-qubit overhead. 
For an error-mitigated strategy based on standard PEC, the negativity scales as $\gamma\le e^{L_\text{PEC}(\varepsilon)\,\epsilon_\mathtt{Q}}$, which implies that Problem \ref{problem} is only solvable with constant sample overhead for $\varepsilon$ above a threshold value (see Lemma \ref{lemma:standardPEC} in App. \ref{app:proofs}).
In contrast, by virtue of Thm. \ref{thm:main}, CIPEC is limited by the $\varepsilon$-independent condition $L\,\epsilon_\mathtt{Q}\leq\log(\omega_2)/(2c_*)$, which is advantageous in high-precision regimes. 
Importantly, if this condition is not satisfied for a given $\omega_2$, CIPEC still solves Problem \ref{problem}, but with a sample overhead now growing as $\gamma^2\le e^{\frac{1}{2}\log(\omega_1)+c_*\,L\,\epsilon_\mathtt{Q}}$ (see App. \ref{app:proofs}). 
This is remarkable because there exist (see Corollary \ref{cor:unsolvable} in App. \ref{app:proofs}) high-precision instances of Problem \ref{problem} that cannot be solved using QEC only or PEC even in the limit of infinite statistical samples. 
Moreover, apart from  physical-qubit overhead, note that CIPEC enables also a logical-circuit depth reduction, with $L=\mathcal{O}(G\,\text{polylog}(G))$ being $\varepsilon$-independent while QEC and PEC require $\mathcal{O}(G\,\text{polylog}(G/\varepsilon))$ depth. 
Finally, although Thm. \ref{thm:main} assumes perfect noise characterization (\emph{cf.} Eq \eqref{eq:noisychannels}), in Thm. \ref{thm:stability} in App. \ref{app:stability} we prove that CIPEC is stable against characterization errors, tolerating deviations in the characterization of the logical noise channels of up to diamond norm $\varepsilon/(2L\norm{O})$. 
We stress that CIPEC is agnostic to the choice of compiler, basis, and QEC code -- any improvement in each of these is automatically inherited by CIPEC.

\begin{figure}[t]
    \includegraphics[width=\columnwidth]{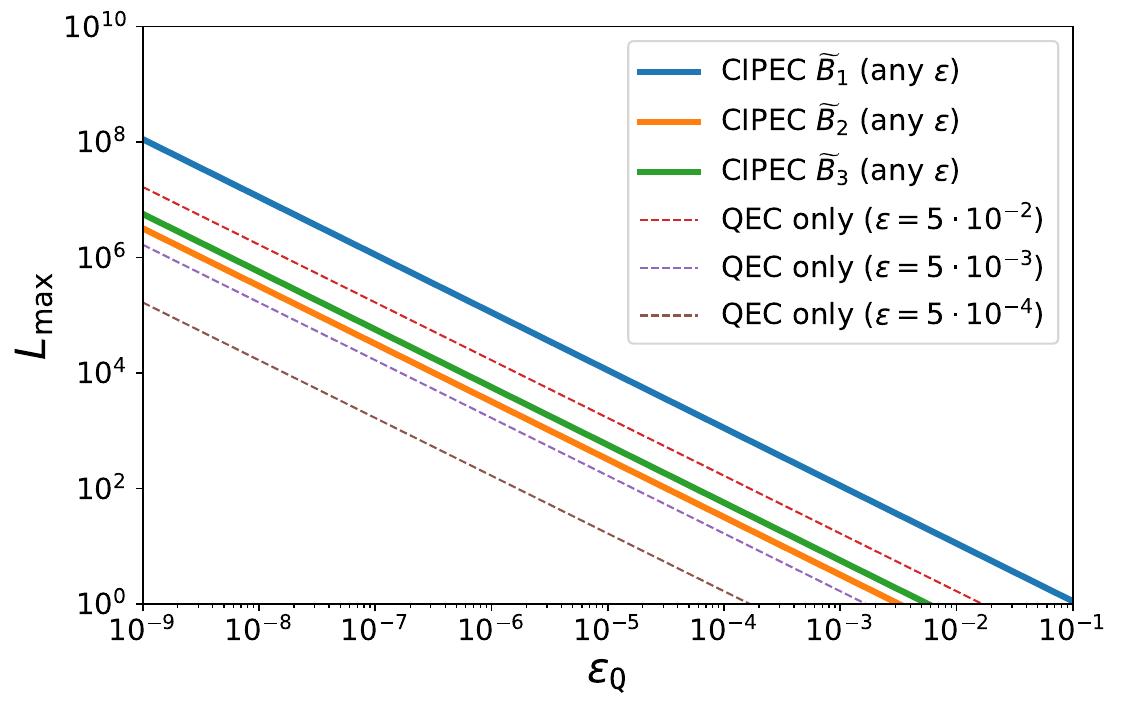}
    \caption{{\bf Maximum feasible circuit size.}  
    Solid lines depict (in loglog scale) the maximum feasible circuit size for CIPEC after compilation, $L_\text{max}=1/(2c_*\epsilon_\mathtt{Q})$, as a function of the logical error rate. 
    The sample overhead is $\gamma^2=e^2$ and the bases are $\widetilde{B}_1,\widetilde{B}_2,\widetilde{B}_3$ of Tab. \ref{tab:bases}. 
    These curves remain the same regardless of the precision $1/\varepsilon$. 
    Dashed lines show the corresponding quantity for a strategy without QEM where compilation, systematic, and statistical errors are taken equal (\emph{i.e.}, choosing $\xi=\eta=3$ in Lemma \ref{lemma:maxG_ECstrategy} in App. \ref{app:proofs}), in which case $L_\text{max}=\varepsilon/(3\epsilon_\mathtt{Q})$. 
    One sees that, for any fixed $\epsilon_\texttt{Q}$, CIPEC allows solving instances not solvable by QEC alone -- namely, those requiring precision $1/\varepsilon \geq 2c_*/3$. 
    One explicit such example is shown in Fig. \ref{fig:jones}.}
    \label{fig:depthmax}
\end{figure}

\paragraph*{Application: Jones polynomial estimation.}
As an illustration, we apply our framework to the problem of estimating the Jones polynomial $J_K(q)$ of a knot (or link) $K$ \cite{Kauffman2001}. 
This knot invariant is a polynomial over the complex variable $q$ with integer coefficients determined solely by the knot topology. 
We focus on the estimation of $J_K(q)$ at the special point $q=e^{2\pi i/5}$ to relative precision $\varepsilon$, which is known to be BQP-complete under certain conditions on $\varepsilon$ and $K$ (see App. \ref{app:jones} for details). 
As shown in \cite{laakkonen_less_2025}, this quantity is proportional to the matrix element $\expval{s\vert U_\Sigma\vert s}$, where $\ket{s}\coloneq\ket{0101\cdots010}$ is a $(n_s+1)$-qubit computational basis state and $U_\Sigma$ is a unitary representation of the braid word $\Sigma$ on an even number $n_s$ of strands describing the knot $K$ with a plat closure \cite{kauffman_fibonacci_2008}. 
We use the control-free Hadamard test quantum algorithm proposed in \cite{laakkonen_less_2025} (see App. \ref{app:jones}) to statistically estimate $\expval{s\vert U_\Sigma\vert s}$ in Eq. \eqref{eq:jones}, which falls within the scope of Problem \ref{problem} with $U\rho \,U^\dagger=\ketbra{s}{s}$, $G=\mathcal{O}(\vert\Sigma\vert, n)$ two-qubit gates, and $O=\frac{1}{2}(U_\Sigma^{\phantom{\dagger}}+U_\Sigma^{{\dagger}})$ or $O=\frac{1}{2i}(U_\Sigma^{\phantom{\dagger}}-U_\Sigma^{{\dagger}})$ for the real and imaginary parts of $\expval{s\vert U_\Sigma\vert s}$, respectively. 
Thus, we can solve it for a generic knot using CIPEC as long as the total circuit size $L=\mathcal{O}\big((\vert\Sigma\vert+n)\log^{c_2}(\vert\Sigma\vert+n)\big)\le L_\text{max}=1/(2c_*\epsilon_\mathtt{Q})$.

\begin{figure}[t]
    \includegraphics[width=\columnwidth]{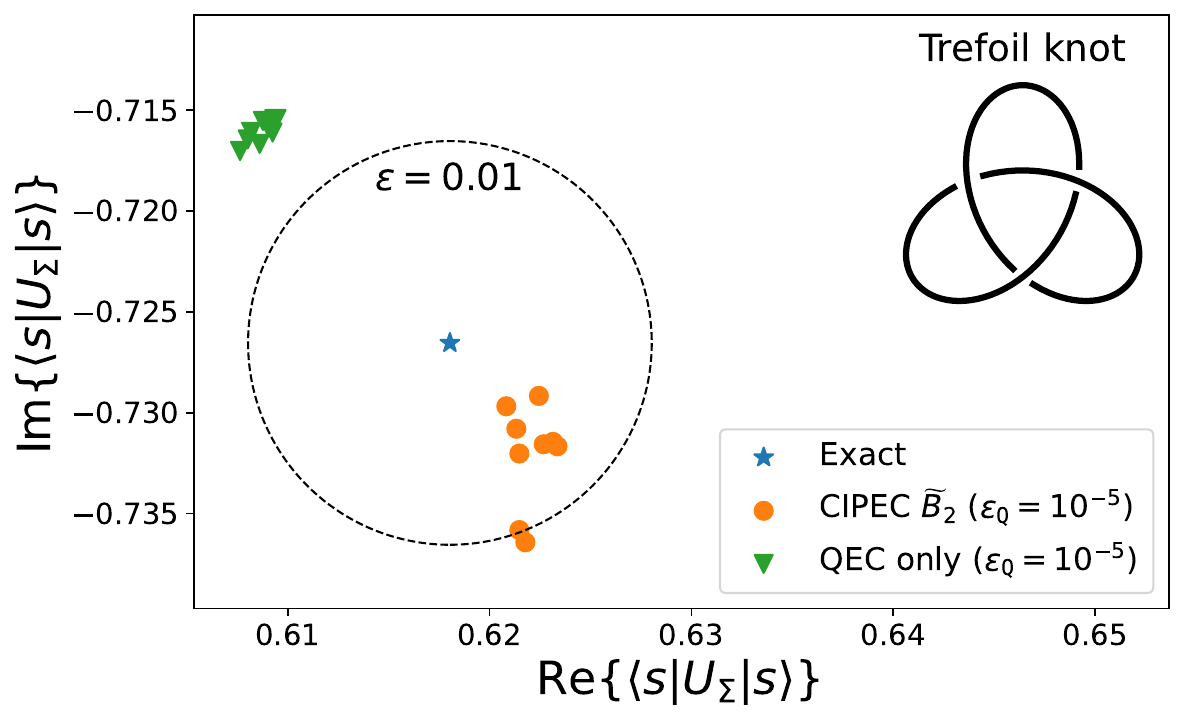}
  \caption{{\bf CIPEC for Jones polynomial estimation.}\break  
    Quantum estimation of the Jones polynomial at $q=e^{i \frac{2\pi}{5}}$ for the trefoil knot (inset) using the control-free Hadamard test of \cite{laakkonen_less_2025}. 
    The target relative precision is $\varepsilon=10^{-2}$ and we use {\sc Qibo} \cite{Qibo2021} to simulate a device of $n_{\ell}=5$ noisy logical qubits. 
    For convenience, we display the results in terms of the scaled quantity $\bra{s}U_{\Sigma}\ket{s}\propto J_{\text{trefoil}}\big(e^{i \frac{2\pi}{5}}\big)$ defined in Eq.(E1) of SI-V. 
    The blue star marks the exact result $J_{\text{trefoil}}(q)=q^{-1}+q^{-3}-q^{-5}$;  green triangles are the result of 10 distinct trial estimations without error mitigation using a shot-noise simulation of the compiled circuit for $U_{\Sigma}$; 
    the orange dots show the result of 10 distinct trials using CIPEC with the minimal basis $\widetilde{B}_2$ and a sample overhead $\gamma^2\approx2.46$; 
    the number of samples in both cases follows the expressions in SI-II with a failure probability $\delta=0.1$. 
    We see that CIPEC consistently delivers estimates within the required precision while the unmitigated estimates fail.
    } 
    \label{fig:jones}
\end{figure}

In Fig. \fig{jones} we show the results for the trefoil knot assuming a noisy logical device realizing Clifford + $T$ operations with local depolarizing noise after each gate. 
The corresponding Hadamard test circuit can be rearranged in terms of $G=9$ ideal two-qubit gates $U_i$ (see App. \ref{app:jones}), which were then compiled into the native Clifford + $T$ gateset using the {\sc gridsynth} algorithm \cite{ross2016}. 
The depolarizing strengths were chosen as $10^{-6}$ for single-qubit gates and state preparation channels, and $10^{-5}$ for CNOT and $T$ gates, which imply $\epsilon_\texttt{Q}=10^{-5}$. 
The target relative precision (or, equivalently, the absolute precision for $\expval{s\vert U_\Sigma\vert s}$ in Eq. \eqref{eq:jones}) was set to $\varepsilon=10^{-2}$ and we used the minimal basis $\widetilde{B}_2$. 
We compare CIPEC (with $\omega_1=\omega_2=e$) with the QEC-only strategy of Lemma 5, SI-II (with $\xi=\eta=3$). 
The compilation errors are $\epsilon_{c,\text{CIPEC}}\approx3.5\times10^{-4}$ and $\epsilon_{c,\text{QEC}}\approx3.7\times10^{-4}$, giving a circuit size of $L=3902$ for QEC and $L=3962$ for CIPEC (in the worst-case). 
The former violates the feasibility condition $L\le L_\text{max}$, while the latter does not. 
The total number of samples was $S_{\text{CIPEC}}\approx 5.9\times10^{5}$ (corresponding to $\gamma^2\approx2.46$) and for QEC we used $S_{\text{QEC}}\approx1.07\times10^{6}$ as in Lemma 5. 
We see that CIPEC is able to deliver an estimate within the desired precision while QEC alone is not, even when allowing QEC to collect extra samples. 
This is in agreement with the $\varepsilon$-independence of the CIPEC overheads and the feasibility conditions illustrated in Fig. \ref{fig:depthmax} for $\epsilon_\mathtt{Q}=10^{-5}$.

\paragraph*{Closing remarks.}
The ability to estimate expectation values using quantum resources that are independent of the target precision can unlock important applications where high precision is required. Apart from the estimation of Jones polynomials considered here, other relevant use cases may be found, \emph{e.g.}, in chemistry and materials science.
For instance, for ground-state energy estimation of molecules one typically requires chemical accuracy, which implies that $1/\varepsilon$ grows with the number of spin orbitals. 
For QEC-only strategies, this in turn implies that both the size of the compiled logical circuit and the code-distance (hence also the physical-qubit count) must explicitly grow with the number of orbitals as well. In contrast, CIPEC does the job with both  quantities insensitive to $\varepsilon$, incurring a moderate sample overhead $\gamma^2$. Ultimately, of course, any statistical estimation strategy (including CIPEC and QEC alone) will fail for a task demanding too-high precision, given that the total sample complexity (i.e., number of runs) unavoidably grows with $1/\varepsilon^2$. However, for early fault-tolerant hardware, sample complexity is a far more abundant resource than logical-circuit complexity or code-distance. This means that there is a precision regime achievable only by CIPEC (see Lem. \ref{lemma:maxG_ECstrategy} and Corollary \ref{cor:unsolvable}, in App. \ref{app:proofs}). With that in mind, our findings offer a practical route towards fault-tolerant quantum computation with precision-independent overheads.

\vspace{.2cm}

\paragraph*{Acknowledgements. }
We thank Ryuji Takagi, Suguru Endo, Fernando G. S. L. Brandão, Samson Wang, Ingo Roth, and Ariel Bendersky for discussions. D.S.F. acknowledges financial support from the Novo Nordisk
Foundation (Grant No. NNF20OC0059939 Quantum for Life)

\newpage

\bibliography{references}

\appendix

\vspace{1cm}

\section{Logical gate characterization.} \label{app:gst}
To characterize the implementable operations in  $\widetilde{\mathcal{A}}$, we can use standard gate set tomography (GST) methods \cite{merkel2013,blumekohout2013robustselfconsistentclosedformtomography}. 
$\widetilde{A}$ includes noisy gates $\widetilde{V}$ from a universal set and the non-unitary channels $\widetilde{\Gamma}$. 
The former are well described by ideal unitary gates followed by stochastic Pauli noise, while the latter can be more general. 
For concreteness, here we focus on a device that can implement CNOT and tensor-product gates $\{I,H,S,S^\dagger,T\}^{\otimes2}$ on any pair of logical qubits connected by the fault-tolerant architecture in question (Pauli gates ($X,Y,Z$) can be implemented without noise by changing the Pauli frame \cite{Knill_2005}). The gate-set size is then $\vert\widetilde{A}\vert=26\tau$, where $\tau$ is a logical-architecture dependent factor that, in the foreseen fault-tolerant architectures, is expected to scale as $\tau=\mathcal{O}(n)$. For example, in superconducting-qubit platforms \cite{yoder2025tourgrossmodularquantum}, the scaling follows from local circuit connectivity. In turn, for platforms with all-to-all connectivity due to movable qubits, such as Rydberg atoms \cite{QuEra_FT_architecture}, 2-qubit gates will be executed at few entangling zones while single-qubit ones will be executed at each logical qubit's position, giving again $\tau=\mathcal{O}(n)$. 
Standard GST requires $\mathcal{O}(1/\epsilon_\text{char}^2)$ uses to estimate a two-qubit channel up to diamond norm error $\epsilon_\text{char}$ and, as shown in Thm. 7 in the SI, CIPEC requires $\epsilon_\text{char}\le\varepsilon/(2L\|O\|)$ for each gate.  
Putting everything together, this implies a total of
$\mathcal{O}\left(\tau L^2\|O\|^2/\varepsilon^2\right)$
runs, which is prohibitive for practical applications. 
Fortunately, this scaling can be improved using long-sequence GST \cite{Nielsen_2021}. 
Empirically, it has been verified that, using a set of $\mathcal{O}(\log(L_{\text{GST}}))$ sequences of gates with maximum depth $\mathcal{O}(L_{\text{GST}})$ and collecting $S$ measurement samples for each, the diamond norm error in the estimated gates scales as $\epsilon_\text{char}=\mathcal{O}(1/(L_{\text{GST}}\sqrt{S}))$. 
Therefore, the time required by long-sequence GST to characterize a given 2-qubit gate set is $T_\text{char}=\mathcal{O}(\log(L_{\text{GST}})\,S\,L_{\text{GST}})$ per logical qubit pair. 
The larger the sequence size, the more precise the estimation, as long as the hardware allows it (i.e., $L_{\text{GST}}=\mathcal{O}(1/\epsilon_{\mathtt{Q}})$). 
Therefore, considering sequences of maximum depth $L_{\text{GST}}=L$ and $S=\mathcal{O}(\tau/\epsilon^2)$ samples yields a time $T_\text{char}=\mathcal{O}(\tau L\log(L)/\varepsilon^2)$ to characterize all the implementable operations up to the precision required to guarantee stability of CIPEC. 
Moreover, in case the platform allows for parallel execution of 2-qubit gates, this scaling can be further reduced to $T_\text{char}=\mathcal{O}({(\tau/n)} L\log(L)/\varepsilon^2)$. For instance, for 1D connectivity, the universal set can be split into $n-1$ gate sets, which can be characterized in only two separate experiments. In turn, for the 2D square lattice, gates on $2(n-\sqrt{n})$ pairs of qubits have to be characterized, which can be done in $4$ experiments by first scanning row connections and then column ones. 
To assess the feasibility of $T_\text{char}$, we compare it against CIPEC's observable estimation time, namely $T_\text{est}=\mathcal{O}\big((L/n)\gamma^2\|O\|^2/\varepsilon^2\big)$, since the largest circuit sampled has $L$ gates and up to $\mathcal{O}(n)$ of them can be run in parallel. 
We thus see that $T_\text{char}/T_\text{est}=\mathcal{O}\big(\tau \log(L)/(\gamma^2\|O\|^2)\big)$, which scales linearly with the number of qubits ($\tau=\mathcal{O}(n)$) and mildly with $L$. 
Therefore, characterization can be done with a modest runtime overhead compared to CIPEC's own runtime. 

Moreover, if the error-correction budget is sufficiently large, CIPEC can completely bypass the need for logical noise characterization. 
This is because (here $I$ is the two-qubit identity channel)
\begin{align}
\epsilon_\text{char}&\coloneqq\max_{j\in\big[\vert A\vert\big]}\|\mathcal{E}_j-\mathcal{E}^\prime_j\|_\diamond\notag\\
&\le \max_{j\in\big[\vert A\vert\big]}\|\mathcal{E}_j-I\|_\diamond + \max_{j\in\big[\vert A\vert\big]}\|I-\mathcal{E}^\prime_j\|_\diamond \notag\\
&= \epsilon_\mathtt{Q} + \max_{j\in\big[\vert A\vert\big]}\|I-\mathcal{E}^\prime_j\|_\diamond.    
\end{align}
Therefore, if resources allow a device with noise  $\epsilon_\mathtt{Q}\le\varepsilon/(2L\|O\|)$, the stability condition $\epsilon_\text{char}\le\varepsilon/(2L\|O\|)$ is trivially ensured by characterizing the noise channels as identity channels -- in other words, the quasi-probability decomposition in Eq. (5) can be done entirely with respect to noise-free gates and basis elements. 
Moreover, in this regime, the overhead in error correction is strictly smaller than that of standard QEC, and our circuits are shallower than using regular compilation (since we only need to compile to $\varepsilon$-independent precision $\epsilon_c=\mathcal{O}(1/G)$). 
Note that, in this case, the mitigation is of compilation errors, similar to the proposal in  \cite{endo2018errormitigation} based on logical-level PEC with the error inversion method.

\section{Building a basis of 2-qubit channels from implementable operations}
\label{app:basis}

Here we show how to build a basis $\widetilde{B}$ of channels in Def. \ref{def:feasible} using the set of implementable operations of a noisy logical quantum device.
First, consider a two-qubit system $AB$, where $A$ labels the first qubit and $B$ the second.  
The linear span of the set of all CP maps in $AB$ in its Choi representation is the real vector space $\text{Herm}(AB)$ of Hermitian operators acting on the composite system, whose dimension is $4^{4}=256$. 
The linear span of the set of all CPTP maps, on the other hand, is given by  the subspace of all $X\in\text{Herm}(AB)$ with $\trace_{B}X=\openone_{A}$, which has dimension $4^{4}-4^{2}+1=241$ (see, \emph{e.g.}, \cite{watrous2018}). 
Therefore, given a set of two-qubit (CPTP or CP) noisy channels, one can test whether it is a spanning set for the span of CPTP or CP by computing the rank of the Gram matrix built from its elements under the Hilbert-Schmidt inner product, defined by $\trace(XY)$ for all $X,Y\in\text{Herm}(AB)$.

Given a CIPEC-capable noisy quantum device $\mathtt{Q}$ as in Def. \ref{def:feasible}, it is always possible to build a set $\widetilde{B}$ of noisy channels of the form $\widetilde{V}^{(D)}\cup\,\widetilde{\Gamma}$ that passes the Gram matrix test, where we recall that $\widetilde{V}^{(D)}$ denotes a set of sequences of noisy universal gates from $\mathtt{Q}$ of length up to $D$. 
However, these are far from unique, and some sets of channels might be preferable given specific conditions. 
To assess how suited one set of channels will be for a particular application, we need to identify relevant figures of merit. 
From the description of the CIPEC algorithm (Alg. \ref{alg:CIPEC}), it is natural to worry about:
$i)$ the amount of classical pre-processing required to build and store the classical description of all distinct circuits that need to be estimated on the device;
$ii)$ the length (or depth) of each one of those circuits and;
$iii)$ the total number of samples required for the estimation task.
These concerns lead us to identify three figures of merit for a particular set $\widetilde{B}$, respectively: 
the first is the number of elements $|\widetilde{B}|$, which affects the number of distinct circuits that need to be implemented on a particular task; 
second, the maximal length $D$ of the elements of $\widetilde{B}$ affects the total length and depth of each distinct circuit; 
and lastly, the maximal negativity $c_{*}$ of a unitary channel decomposed with respect to $\widetilde{B}$ indirectly affects the total number of samples required for the estimation. 
Although it is not hard to see why the first two figures of merit address their respective concerns, the last one requires further explanation.

Let $\widetilde{B}$ be a spanning set for the linear span of CPTP (or CP) channels, and let $c_{*}$ be as given in Def. \ref{def:negativity}.
The following lemma shows that $c_{*}$ is the maximal amplification factor required for $\|\mathcal{C}\|_{\diamond}$ of an arbitrary channel $\mathcal{C}$ to bound negativity $\|\mathcal{C}\|_{\widetilde{B}}$, which is related to the sample complexity overhead of error mitigation.

\begin{lemma}[Negativity and the diamond norm] \label{lem:1-norm_diamond} Given a basis $\widetilde{B}$ for the vector space of two-qubit maps, its associated negativity satisfies
\begin{equation}
    \|\mathcal{C}\|_{\widetilde{B}}\leq c_* \|\mathcal{C}\|_{\diamond}, \qquad c_* \coloneq \max_{\norm{\mathcal{C}}_\diamond=1} \norm{\mathcal{C}}_{\widetilde{B}}\,.
\end{equation}
\end{lemma}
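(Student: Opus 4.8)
The plan is to prove the estimate as a norm-equivalence bound on the finite-dimensional real vector space $\text{span}(\widetilde{B})$, on which both the diamond norm $\|\cdot\|_\diamond$ and the negativity $\|\cdot\|_{\widetilde{B}}$ of Def. \ref{def:negativity} are genuine norms. The first step is to confirm that the negativity is in fact a norm. Since $\widetilde{B}$ is a spanning set, for every $\mathcal{C}\in\text{span}(\widetilde{B})$ the decomposition constraint $\mathcal{C}=\sum_j b_j\widetilde{\mathcal{B}}_j$ is feasible, and the defining minimization is a linear program bounded below by $0$, so its optimum is attained and $\|\mathcal{C}\|_{\widetilde{B}}$ is finite. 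Absolute homogeneity follows by rescaling an optimal decomposition, subadditivity by adding optimal decompositions of two channels, and definiteness because $\|\boldsymbol{b}\|_1=0$ forces $\boldsymbol{b}=0$ and hence $\mathcal{C}=0$.

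Next I would establish that $c_*$ is well-defined and finite, and then read off the bound. The diamond-norm unit sphere $\{\mathcal{C}:\|\mathcal{C}\|_\diamond=1\}$ is closed and bounded, hence compact in finite dimension, and $\|\cdot\|_{\widetilde{B}}$ is continuous (being a norm); thus the maximum defining $c_*$ is attained and finite. The inequality is then immediate by homogeneity: for $\mathcal{C}\neq0$ put $\hat{\mathcal{C}}\coloneq\mathcal{C}/\|\mathcal{C}\|_\diamond$, so that $\|\hat{\mathcal{C}}\|_\diamond=1$ and therefore $\|\hat{\mathcal{C}}\|_{\widetilde{B}}\le c_*$ by the definition of $c_*$; multiplying through by $\|\mathcal{C}\|_\diamond$ and using homogeneity of the negativity gives $\|\mathcal{C}\|_{\widetilde{B}}=\|\mathcal{C}\|_\diamond\,\|\hat{\mathcal{C}}\|_{\widetilde{B}}\le c_*\|\mathcal{C}\|_\diamond$, while $\mathcal{C}=0$ is trivial. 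In this reading $c_*$ is nothing but the operator norm of the identity map $(\text{span}(\widetilde{B}),\|\cdot\|_\diamond)\to(\text{span}(\widetilde{B}),\|\cdot\|_{\widetilde{B}})$, i.e. the optimal constant in the claimed inequality.

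The only subtle point -- and the one I expect to require the most care -- is reconciling the maximization over the entire diamond-norm sphere appearing in the bound with the maximization restricted to \emph{unitary} channels in Def. \ref{def:negativity}. For the use made of this lemma in CIPEC the gap is immaterial, since Eq. \eqref{eq:lp} only ever decomposes unitary channels $\mathcal{U}_i$, for which $\|\mathcal{U}_i\|_\diamond=1$ and the bound collapses to the defining statement $\|\mathcal{U}_i\|_{\widetilde{B}}\le c_*$. To upgrade the bound to a genuinely arbitrary $\mathcal{C}$ while keeping the unitary-restricted $c_*$, one would invoke convexity of the negativity to move the maximum over the convex, compact diamond-norm ball onto its extreme points and then argue that these are exhausted by unitary channels; characterizing those extreme points is the delicate part, so the cleanest route is to adopt the operator-norm reading of $c_*$ above, for which the statement holds verbatim.
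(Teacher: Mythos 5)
Your proof is correct and is essentially the paper's own argument: both rest on equivalence of norms on a finite-dimensional vector space, compactness of the diamond-norm unit sphere plus continuity of $\|\cdot\|_{\widetilde{B}}$ (via the extreme value theorem) to guarantee that the maximum defining $c_*$ is attained, and homogeneity to pass from the unit sphere to arbitrary $\mathcal{C}$, with your verification that the negativity is a genuine norm being a welcome extra step the paper leaves implicit. Your resolution of the unitary-vs-arbitrary ambiguity — taking $c_*$ as the maximum over the full sphere $\|\mathcal{C}\|_\diamond=1$ — is exactly the reading the paper's formula and proof use, and it is in fact the reading required downstream: Lemma \ref{lemma:negativity_ciPEC} applies the bound to the non-unitary difference $\mathcal{U}_i-\widetilde{\mathcal{V}}_{c,i}^{(L_i)}$, not to a unitary channel as your closing remark suggests, so the whole-sphere definition is not merely the cleanest route but the necessary one.
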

\begin{proof}
Given two norms $\norm{\cdot}_{\widetilde{B}}$ and $\norm{\cdot}_\diamond$ on a finite-dimensional vector space, they must be connected within constant factors of one another. Specifically, there exist two real numbers $0 < c_{\text{low}} \leq c_{\text{up}}\coloneq c_*$ such that, for all $\mathcal{C}$ in this vector space, we have
\[
    c_{\text{low}}\norm{\mathcal{C}}_\diamond \leq \norm{\mathcal{C}}_{\widetilde{B}} \leq c_{\text{up}}\norm{\mathcal{C}}_\diamond.
\]
This inequality is trivially true for $\mathcal{C}=0$.  For $\mathcal{C} \neq 0$, we can restrict  to the case where $\norm{\mathcal{C}}_\diamond=1$, obtaining
\[
    c_{\text{low}} \leq \norm{\mathcal{C}}_{\widetilde{B}} \leq c_{\text{up}}.
\]
By the extreme value theorem, a continuous function (in this case $\norm{\cdot}_{\widetilde{B}}$) on a compact set (the closed and bounded set of unitary channels defined by $\norm{\mathcal{C}}_\diamond=1$) must achieve a maximum and minimum value on this set, therefore
\begin{gather}
\begin{aligned}
    c_{\text{low}} \coloneq \min_{\norm{\mathcal{C}}_\diamond=1} \norm{\mathcal{C}}_{\widetilde{B}} \\
    c_{\text{up}} \coloneq \max_{\norm{\mathcal{C}}_\diamond=1} \norm{\mathcal{C}}_{\widetilde{B}}.
\end{aligned}
\end{gather}
\end{proof}

\section{Main proofs}
\label{app:proofs}

Here we prove our main statements, namely Theorem \ref{thm:main} and a similar statement (Lemma \ref{lemma:maxG_ECstrategy} below) for a naive estimator that leverages partial quantum error-correction only without quantum error mitigation.  
For that, we need to introduce two auxiliary lemmas. 
The following lemma gives the diamond norm error of two-qubit gate synthesis after replacing ideal gates from the universal set $V$ by noisy gates from the set of implementable operations $\widetilde{V}$. 

\begin{lemma}[Two-qubit Unitary Synthesis using noisy channels]\label{lemma:noisySK}
Let $\mathcal{U}$ be a two-qubit unitary gate, $\mathcal{V}_{c}^{(L)}$ be a synthesis of $\mathcal{U}$ with $L$ ideal gates from the set $V$, and $\epsilon_c$ denotes the compilation error $\big\|\mathcal{U}-\mathcal{V}_{c}^{(L)}\big\|_\diamond\le\epsilon_c$. 
Let $\widetilde{\mathcal{V}}_{c}^{(L)}$ be the version of $\mathcal{V}_{c}^{(L)}$ with $L$ noisy operations from the implementable set $\widetilde{V}$. 
Then $\big\|\mathcal{U}-\widetilde{\mathcal{V}}_{c}^{(L)}\big\|_\diamond \le \epsilon_c + L\,\epsilon_\mathtt{Q}$.
\end{lemma}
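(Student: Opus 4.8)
The plan is to separate the two sources of error—the idealized compilation error and the accumulated gate noise—using the triangle inequality for the diamond norm, and then control the noise contribution with a standard telescoping argument. First I would write
\begin{equation}
\big\|\mathcal{U}-\widetilde{\mathcal{V}}_{c}^{(L)}\big\|_\diamond \le \big\|\mathcal{U}-\mathcal{V}_{c}^{(L)}\big\|_\diamond + \big\|\mathcal{V}_{c}^{(L)}-\widetilde{\mathcal{V}}_{c}^{(L)}\big\|_\diamond,
\end{equation}
where the first term is bounded by $\epsilon_c$ directly from the hypothesis. It then remains to show that the second term is at most $L\,\epsilon_{\mathtt{Q}}$, i.e.\ that replacing each ideal gate in the synthesis by its noisy counterpart costs at most $\epsilon_{\mathtt{Q}}$ per gate, with the errors accumulating only \emph{linearly} rather than multiplicatively.

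For the second term I would write the ideal and noisy syntheses as ordered compositions $\mathcal{V}_{c}^{(L)}=\prod_{i=1}^{L}\mathcal{V}_{\alpha_i}$ and $\widetilde{\mathcal{V}}_{c}^{(L)}=\prod_{i=1}^{L}\widetilde{\mathcal{V}}_{\alpha_i}$ of the \emph{same} sequence of gate labels $\alpha_1,\dots,\alpha_L\in[\vert V\vert]$, and use the telescoping identity obtained by swapping one ideal gate for its noisy version at a time,
\begin{equation}
\mathcal{V}_{c}^{(L)}-\widetilde{\mathcal{V}}_{c}^{(L)} = \sum_{k=1}^{L}\Big(\prod_{i=k+1}^{L}\mathcal{V}_{\alpha_i}\Big)\circ\big(\mathcal{V}_{\alpha_k}-\widetilde{\mathcal{V}}_{\alpha_k}\big)\circ\Big(\prod_{i=1}^{k-1}\widetilde{\mathcal{V}}_{\alpha_i}\Big),
\end{equation}
which expresses the total deviation as a sum of single-gate deviations sandwiched between partial products of (ideal) gates above position $k$ and (noisy) gates below it. Applying the triangle inequality, the submultiplicativity of the diamond norm under composition, and the fact that every ideal and noisy gate channel is CPTP and hence has unit diamond norm, each summand is bounded by $\big\|\mathcal{V}_{\alpha_k}-\widetilde{\mathcal{V}}_{\alpha_k}\big\|_\diamond\le\epsilon_{\mathtt{Q}}$ by Def.~\ref{def:noisydevice}. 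Summing over the $L$ terms yields $\big\|\mathcal{V}_{c}^{(L)}-\widetilde{\mathcal{V}}_{c}^{(L)}\big\|_\diamond\le L\,\epsilon_{\mathtt{Q}}$, and combining with the first term gives the claim.

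The only subtlety—and the part I would state most explicitly—is the telescoping step together with the two norm properties it rests on: submultiplicativity $\big\|\mathcal{A}\circ\mathcal{B}\big\|_\diamond\le\big\|\mathcal{A}\big\|_\diamond\,\big\|\mathcal{B}\big\|_\diamond$ and the normalization $\big\|\mathcal{A}\big\|_\diamond=1$ for CPTP $\mathcal{A}$. Both are standard facts for the diamond norm, but the entire \emph{linear} (rather than exponential) scaling of the error hinges on the partial products dropping out with unit norm, so I would cite or prove them rather than leave them implicit. I would also remark that the argument is insensitive to the ordering convention for $\prod$, and that it uses only that the noisy synthesis channels in $\widetilde{V}$ are trace-preserving (so their partial products are non-expanding in diamond norm), which is the case here.
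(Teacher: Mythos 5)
Your proof is correct and follows essentially the same route as the paper's: the paper applies the same initial triangle inequality and then invokes repeated use of the hybrid inequality $\|\mathcal{C}_1\mathcal{C}_0-\mathcal{D}_1\mathcal{D}_0\|_\diamond \le \|\mathcal{C}_0-\mathcal{D}_0\|_\diamond+\|\mathcal{C}_1-\mathcal{D}_1\|_\diamond$, which is precisely your telescoping argument packaged as a two-factor inequality. Your version merely makes explicit the telescoping sum and the two diamond-norm facts (submultiplicativity and unit norm of CPTP maps) that the paper leaves implicit.
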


\begin{proof}
\begin{align*}
\big\|\mathcal{U}-\widetilde{\mathcal{V}}_{c}^{(L)}\big\|_\diamond &= \big\|\mathcal{U}-\mathcal{V}_{c}^{(L)}+\big(\mathcal{V}_{c}^{(L)}-\widetilde{\mathcal{V}}_{c}^{(L)}\big)\big\|_\diamond
\\
&\le \big\|\mathcal{U}-\mathcal{V}_{c}^{(L)}\big\|_\diamond + \big\|\mathcal{V}_{c}^{(L)}-\widetilde{\mathcal{V}}_{c}^{(L)}\big\|_\diamond \\
&\le \epsilon_c + L\,\epsilon_\mathtt{Q}\,,
\end{align*}
where $\big\|\mathcal{U}-\mathcal{V}_{c}^{(L)}\big\|_\diamond\le\epsilon_c$ by assumption, and $\big\|\mathcal{V}_{c}^{(L)}-\widetilde{\mathcal{V}}_{c}^{(L)}\big\|_\diamond\le L\,\epsilon_\mathtt{Q}$ follows by repeated use of the inequality $\|\mathcal{C}_1\mathcal{C}_0-\mathcal{D}_1\mathcal{D}_0\|_\diamond \le \|\mathcal{C}_0-\mathcal{D}_0\|_\diamond+\|\mathcal{C}_1-\mathcal{D}_1\|_\diamond$. 
\end{proof}

\noindent This result can then be used to prove the following upper bound on the negativity of CIPEC, which, in turn, is the main ingredient for proving Theorem \ref{thm:main}. 

\begin{lemma}[CIPEC negativity upper bound]\label{lemma:negativity_ciPEC}
Let $\epsilon_\mathtt{Q}$ as in Eq. \eq{noisychannels}, $\mathcal{U}_i$ a two-qubit unitary channel, and $\mathcal{U}_i=\widetilde{\mathcal{V}}_{c,i}^{(L_i)}+ \sum_{j\in[|\widetilde{B}|]} b_{i,j} \widetilde{\mathcal{B}}_j$ be its compilation-informed quasi-probability decomposition Eq. \eqref{eq:lp}. 
Then the total negativity of $\mathcal{U}=\prod_{i\in[G]}\mathcal{U}_i$ is $\gamma \le e^{\lambda}$ with $\lambda:=\sum_{i\in[G]}c_{*}\big[\epsilon_{c,i} + L_i\,\epsilon_\mathtt{Q}\big]$. 
\end{lemma}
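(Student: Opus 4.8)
The plan is to bound each single-gate quantity $\gamma_i = 1 + \|\mathbf{b}_i\|_1$ separately and then combine the bounds multiplicatively, exploiting the fact that the total negativity defined in Alg. \ref{alg:CIPEC} factorizes as $\gamma = \prod_{i\in[G]}\gamma_i$ over the per-gate decompositions. The observation that drives everything is that the equality constraint of the linear program Eq. \eqref{eq:lp} pins down the \emph{residual channel} $\mathcal{U}_i - \widetilde{\mathcal{V}}_{c,i}^{(L_i)} = \sum_{j} b_{i,j}\,\widetilde{\mathcal{B}}_j$, so $\|\mathbf{b}_i\|_1$ is an admissible value of the negativity functional of Def. \ref{def:negativity} evaluated on this residual. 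Since the LP minimizes $\|\mathbf{b}_i\|_1$ over all feasible decompositions, it in fact realizes the negativity $\|\mathcal{U}_i - \widetilde{\mathcal{V}}_{c,i}^{(L_i)}\|_{\widetilde{B}}$. (Feasibility is inherited from the hypothesis that the decomposition exists: $\mathtt{Q}$ is CIPEC-capable, so the residual lies in $\operatorname{span}(\widetilde{B})$.)

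The core of the argument is then a two-step norm comparison applied to the residual. First I would invoke Lemma \ref{lem:1-norm_diamond} to trade the $\widetilde{B}$-negativity for the diamond norm,
\[
\|\mathbf{b}_i\|_1 = \big\|\mathcal{U}_i - \widetilde{\mathcal{V}}_{c,i}^{(L_i)}\big\|_{\widetilde{B}} \le c_*\,\big\|\mathcal{U}_i - \widetilde{\mathcal{V}}_{c,i}^{(L_i)}\big\|_\diamond .
\]
Second I would control the residual's diamond norm with Lemma \ref{lemma:noisySK}, which separates the synthesis error of the ideal compilation from the accumulated logical-gate noise,
\[
\big\|\mathcal{U}_i - \widetilde{\mathcal{V}}_{c,i}^{(L_i)}\big\|_\diamond \le \epsilon_{c,i} + L_i\,\epsilon_\mathtt{Q} .
\]
Chaining the two yields $\|\mathbf{b}_i\|_1 \le c_*\big(\epsilon_{c,i} + L_i\,\epsilon_\mathtt{Q}\big)$, and therefore $\gamma_i \le 1 + c_*\big(\epsilon_{c,i} + L_i\,\epsilon_\mathtt{Q}\big)$.

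Finally I would take the product over the $G$ gates and linearize with the elementary bound $1+x\le e^{x}$ (valid since every term is non-negative here),
\[
\gamma = \prod_{i\in[G]}\gamma_i \le \prod_{i\in[G]}\Big(1 + c_*\big[\epsilon_{c,i} + L_i\,\epsilon_\mathtt{Q}\big]\Big) \le \exp\!\Big(\sum_{i\in[G]} c_*\big[\epsilon_{c,i} + L_i\,\epsilon_\mathtt{Q}\big]\Big) = e^{\lambda},
\]
which is exactly the claimed bound.

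I do not anticipate a deep obstacle: the statement is essentially a clean composition of the two auxiliary lemmas with a standard exponential estimate. The one point genuinely requiring care is the identification of $\|\mathbf{b}_i\|_1$ with the negativity of the residual channel (rather than a loose upper bound), which relies on the LP in Eq. \eqref{eq:lp} truly minimizing $\|\mathbf{b}_i\|_1$ over a feasible set. A secondary subtlety is that Lemma \ref{lem:1-norm_diamond} must supply the constant $c_*$ as a uniform bound on $\|\cdot\|_{\widetilde{B}}/\|\cdot\|_\diamond$ over the whole $\operatorname{span}(\widetilde{B})$, so that it applies to the \emph{non-unitary} residual $\mathcal{U}_i - \widetilde{\mathcal{V}}_{c,i}^{(L_i)}$ and not merely to unitary channels; I would make sure to cite Lemma \ref{lem:1-norm_diamond} in the form that guarantees this.
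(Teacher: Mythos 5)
Your proposal is correct and follows essentially the same route as the paper's proof: identify $\gamma_i - 1$ with the negativity of the residual $\mathcal{U}_i - \widetilde{\mathcal{V}}_{c,i}^{(L_i)}$, bound it via Lemma \ref{lem:1-norm_diamond} and Lemma \ref{lemma:noisySK}, then multiply and apply $1+x\le e^x$. Your closing remark about needing $c_*$ as a uniform norm-equivalence constant on $\operatorname{span}(\widetilde{B})$ (not just on unitary channels) is a legitimate point of care, and it is precisely the form in which Lemma \ref{lem:1-norm_diamond} is proved in the appendix.
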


\begin{proof}
For each $i\in[G]$ the minimum negativity satisfies
$\gamma_i = 1+\norm{\mathcal{U}_i-\widetilde{\mathcal{V}}_{c,i}^{(L_i)}}_{\widetilde{B}}\le1+c_{*}\norm{\mathcal{U}_i-\widetilde{\mathcal{V}}_{c,i}^{(L_i)}}_{\diamond}$, according to Lemma \ref{lem:1-norm_diamond}. 
Using Lemma \ref{lemma:noisySK} we get $\gamma_i\le 1+\lambda_i$ with $\lambda_i=c_*[\epsilon_{c,i} + L_i\,\epsilon_\mathtt{Q}]$, and hence $\gamma=\prod_{i\in[G]}\gamma_i\le\prod_{i\in[G]}(1+\lambda_i)\le e^{\lambda}$ with $\lambda\coloneq\sum_{i\in[G]}\lambda_i$. 
\end{proof}

\noindent With the results above, we can finally prove our main statement, Theorem \ref{thm:main}, which we restate here for convenience 

\begin{reptheorem}{thm:main}
Let $\{U_i\}_{i\in[G]}, \varepsilon,\delta,\rho, O$ as in Problem \ref{problem},  $\mathtt{Q}$ be a CIPEC-capable device with worst-case error $\epsilon_{\mathtt{Q}}$ and worst-case negativity $c_*$, and $\omega_1,\omega_2>1$ be constants. 
Denote by $L_i$ the length of a logical-gate sequence that compiles $U_i$ up to diamond-norm error $\epsilon_{c,i}\le\log(\omega_1)/(2c_*G)$ into the universal set $V$ realized by $\mathtt{Q}$, and let $L \coloneq\sum_{i\in[G]}L_i$ be the total length of the logical-circuit number of logical gates.
Then, if $L \le L_{\text{max}} \coloneq \log(\omega_2)/(2c_*\epsilon_\mathtt{Q})$, Alg. \ref{alg:CIPEC} solves Problem \ref{problem} on $\mathtt{Q}$ for any target precision $1/\varepsilon$ with constant sample overhead $\gamma^2\le\omega_1\omega_2$, i.e., using at most $\frac{1}{2}\gamma^2\,\norm{O}^2\varepsilon^{-2}\log(2/\delta)$ samples.
\end{reptheorem}

\begin{proof}
Since, by construction, the quasiprobability decomposition in Eq. \eqref{eq:lp} is an unbiased estimator for each $\mathcal{U}_{i}$, the total estimation error of the expectation value in Alg. \ref{alg:CIPEC} comes only from statistics and can be arbitrarily decreased by collecting more samples. 
In particular, a target error $\varepsilon$ requires a statistical accuracy $\epsilon_s=\varepsilon/\gamma$, where $\gamma$ is the total negativity of the quasiprobability decomposition of $\mathcal{U}$. 
By Hoeffding inequality, this can be achieved (with probability $1-\delta$) by collecting 
\begin{align}\label{eq:samples-cipec}
 S_{\text{CIPEC}}\coloneq\gamma^2\|O\|^2\log(2/\delta)/(2\varepsilon^2)   
\end{align}
samples, where $\|O\|$ is the spectral norm of $O$, which gives the range of the random variable under our assumption of  measuring in the eigenbasis of $O$. 
In Lemma \ref{lemma:negativity_ciPEC}, we showed that $\gamma\le e^{\lambda}$ with $\lambda:=\sum_{i\in[G]}c_{*}\big[\epsilon_{c,i} + L_i\,\epsilon_\mathtt{Q}\big]$.
This can be made constant, more precisely $\gamma\le \sqrt{\omega_1\omega_2}=\mathcal{O}(1)$, by ensuring $\lambda\le\log(\sqrt{\omega_1\omega_2})$ as follows: 
$i)$ compile each $\mathcal{U}_i$ to precision $\epsilon_{c,i}\le\log(\sqrt{\omega_1})/(c_*G)$ such that $\sum_{i\in[G]}c_*\,\epsilon_{c,i}\le\log(\sqrt{\omega_1})$; this fixes the total sequence length $L\coloneq\sum_{i\in[G]}L_i$; 
$ii)$ the condition $\lambda\le\log(\sqrt{\omega_1\omega_2})$ then becomes $c_*\,L\,\epsilon_\mathtt{Q}\le \log(\sqrt{\omega_2})$, which is true by assumption. 
\end{proof}

\begin{lemma}[Quantum resource requirements for solving Problem \ref{problem} using QEC only]\label{lemma:maxG_ECstrategy}
Let $\eta>0$ and $\xi>1$ be given constants. 
In the same setup of Theorem \ref{thm:main}, compile each $\mathcal{U}_{i}$ into a sequence $\mathcal{V}^{(L_{i})}_{c,i}$, with compilation error $\epsilon_{c,i}\leq\varepsilon/(\eta G)$, such that $\mathcal{V}^{(L)}_c=\prod_{i\in[G]}\mathcal{V}^{(L_{i})}_{c,i}$ is a compilation of $\mathcal{U}$ of total sequence length $L\coloneq\sum_{i\in[G]}L_i$. Then, statistically estimate $\operatorname{tr}\big(O\,\widetilde{\mathcal{V}}^{(L)}_c(\rho)\big)$, where $\widetilde{\mathcal{V}}^{(L)}_c$ is the noisy version of $\mathcal{V}^{(L)}_c$, using $\xi^{2}\|O\|^{2}\log(2/\delta)/(2\varepsilon^{2})$ samples. 
This algorithm solves Problem \ref{problem} if and only if $$L\epsilon_{\mathtt{Q}}\leq\left(1-\frac{1}{\xi}-\frac{1}{\eta}\right)\varepsilon \,.$$ 
\end{lemma}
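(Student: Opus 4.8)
The plan is to decompose the total estimation error of $\overline{O}$ into a systematic (bias) contribution, which is present because the algorithm samples the \emph{noisy} compiled channel $\widetilde{\mathcal{V}}^{(L)}_c$ in place of the ideal $\mathcal{U}$ and cannot be removed (there is no mitigation here), plus a statistical contribution from finite sampling. I would then recognize the displayed inequality as exactly the condition under which the worst-case sum of these two pieces stays at or below $\varepsilon$.

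For the \emph{if} direction I would start from the triangle inequality
\begin{align*}
\big|\overline{O}-\langle O\rangle\big|\le\big|\overline{O}-\operatorname{tr}\big(O\,\widetilde{\mathcal{V}}^{(L)}_c(\rho)\big)\big|+\big|\operatorname{tr}\big(O\,(\widetilde{\mathcal{V}}^{(L)}_c-\mathcal{U})(\rho)\big)\big|.
\end{align*}
The second (bias) term I would bound by $\|O\|\,\big\|(\widetilde{\mathcal{V}}^{(L)}_c-\mathcal{U})(\rho)\big\|_1\le\|O\|\,\big\|\widetilde{\mathcal{V}}^{(L)}_c-\mathcal{U}\big\|_\diamond$ and then apply Lemma \ref{lemma:noisySK} to the whole circuit: subadditivity of the diamond norm under composition gives total compilation error $\epsilon_c\le\sum_{i\in[G]}\epsilon_{c,i}\le\varepsilon/\eta$, so the bias is at most $\|O\|\,(\varepsilon/\eta+L\epsilon_\mathtt{Q})$. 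The first (statistical) term I would control with Hoeffding's inequality exactly as in the proof of Theorem \ref{thm:main}: averaging $\gamma^2\|O\|^2\log(2/\delta)/(2\varepsilon^2)$ outcomes of range $\|O\|$ gives an error of at most $\varepsilon/\gamma$ with probability at least $1-\delta$. Taking $\|O\|=1$ (the general case carries the same $\|O\|$ weighting throughout), these combine to $\big|\overline{O}-\langle O\rangle\big|\le\varepsilon/\gamma+\varepsilon/\eta+L\epsilon_\mathtt{Q}$ with probability $1-\delta$, and the right-hand side is $\le\varepsilon$ precisely when $L\epsilon_\mathtt{Q}\le(1-1/\gamma-1/\eta)\varepsilon$, which proves the \emph{if} part.

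For the \emph{only if} direction I would argue that each worst-case contribution is saturable, hence genuinely necessary. Since the device is characterized only in diamond norm up to $\epsilon_\mathtt{Q}$, one can exhibit admissible logical error channels attaining equality in Lemma \ref{lemma:noisySK}, together with an input $\rho$ and observable $O$ realizing $\big|\operatorname{tr}\big(O\,(\widetilde{\mathcal{V}}^{(L)}_c-\mathcal{U})(\rho)\big)\big|=\epsilon_c+L\epsilon_\mathtt{Q}$ with the compilation budget used maximally ($\epsilon_c=\varepsilon/\eta$); simultaneously, the Hoeffding half-width $\varepsilon/\gamma$ is the best statistical guarantee available for a bounded estimator at confidence $1-\delta$ with this sample count. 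The guaranteed worst-case error is then exactly $\varepsilon/\gamma+\varepsilon/\eta+L\epsilon_\mathtt{Q}$, which exceeds $\varepsilon$ whenever the displayed inequality fails, so the Problem \ref{problem} guarantee cannot hold. I expect this saturation step to be the main obstacle: the \emph{if} direction is a one-line triangle-inequality estimate, whereas making \emph{only if} rigorous requires producing (or at least arguing the existence of) adversarial noise, state, and observable that realize the bias bound while the statistical fluctuation remains irreducible, so that neither budget can be tightened.
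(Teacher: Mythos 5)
Your \emph{if} direction is exactly the paper's proof: the same triangle-inequality split into a statistical term plus a bias term, the same use of Lemma \ref{lemma:noisySK} together with subadditivity of the diamond norm under composition to bound the bias by $\varepsilon/\eta + L\,\epsilon_{\mathtt{Q}}$, and the same Hoeffding sample count giving statistical error $\varepsilon/\gamma$; summing the budgets and demanding the total stay at most $\varepsilon$ yields the displayed condition. (Your aside about $\|O\|$ is well taken: strictly the bias bound carries a factor $\|O\|$ that the statistical term does not, and the paper silently drops it as well, effectively normalizing $\|O\|\le 1$.) Where you genuinely differ is the \emph{only if} direction. The paper does not actually prove a converse: its proof simply imposes $\varepsilon(1/\eta+1/\gamma)+L\,\epsilon_{\mathtt{Q}}\le\varepsilon$ on the worst-case error budgets and reads the condition off, so necessity is asserted at the level of budget accounting rather than established by an adversarial construction. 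Your saturation sketch is the right idea for making that rigorous, but as written it has two soft spots: you would need to exhibit concrete noise channels, input state, and observable attaining equality in the diamond-norm bias bound (e.g., coherent over-rotations accumulating linearly across the $L$ gates), and your claim that the Hoeffding half-width $\varepsilon/\gamma$ is ``the best statistical guarantee available'' is not literally correct, since Hoeffding is only an upper bound on the deviation probability; irreducibility of the statistical term would require an anti-concentration (e.g., binomial lower-bound) argument. Since you flag this step as incomplete rather than claim it, your proposal is no weaker than the paper's own proof on this point, and is in fact more candid about where the ``if and only if'' actually stands.
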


\begin{proof}[Proof of Lemma \ref{lemma:maxG_ECstrategy}]
Repeated use of Lemma \ref{lemma:noisySK} followed by application of the inequality $\|\mathcal{C}_1\mathcal{C}_0-\mathcal{D}_1\mathcal{D}_0\|_\diamond \le \|\mathcal{C}_0-\mathcal{D}_0\|_\diamond+\|\mathcal{C}_1-\mathcal{D}_1\|_\diamond$ allow us to conclude that
$$\Big|\trace(O\mathcal{U}(\rho))-\trace(O\widetilde{\mathcal{V}}^{(L)}_{c}(\rho))\Big|\leq \sum_{i\in[G]}(L_i\epsilon_{\mathtt{Q}}+\epsilon_{c,i})\,.$$
Therefore, if $y$ is the random variable representing output of measuring $O$ on the noisy state $\widetilde{\mathcal{V}}^{(L)}_{c}(\rho)$, and $\overline{y}$ is its empirical mean, the total estimation error can written as
$$\Big|\overline{y}-\trace(O\mathcal{U}(\rho))\Big|\leq\Big|\overline{y}-\trace(O\mathcal{V}^{(L)}_{c}(\rho))\Big|+\sum_{i\in[G]}(L_i\epsilon_{\mathtt{Q}}+\epsilon_{c,i}),$$
or in other words, $\overline{y}$ is an unbiased estimator for $\trace(O\mathcal{V}^{(L)}_{c}(\rho))$, but a biased estimator for $\trace(O\,\mathcal{U}(\rho))$ with bias given by $\sum_{i\in[G]}(L_i\epsilon_{\mathtt{Q}}+\epsilon_{c,i})$. Now let $\epsilon_{s}$ be an upper bound for the estimation error of $\trace(O\mathcal{V}^{(L)}_{c}(\rho))$.
Then, in order to have $|\overline{y}-\trace(O\,\mathcal{U}(\rho))|\leq\varepsilon$ we need the following conditions:
$i)$ for every $i\in[G]$, set the compilation error to $\epsilon_{c,i}\le\varepsilon/(\eta\,G)$ -- this fixes the total sequence length $L\coloneq\sum_{i\in[G]}L_i$; 
$ii)$ collect enough statistics so as to make the statistical error $\epsilon_{s}=\varepsilon/\xi$ -- using Hoeffding's inequality, this can be achieved with success probability $1-\delta$ using 
\begin{align}\label{eq:samples-qec}
    S_{\text{QEC}}\coloneq\gamma^{2}\|O\|^2\log(2/\delta)/(2\varepsilon^2)
\end{align} 
samples; 
$iii)$ then, imposing $\varepsilon\,(1/\eta+1/\xi)+L\,\epsilon_{\mathtt{Q}}\leq\varepsilon$ we obtain the remaining condition given by the lemma, which concludes the proof.
\end{proof}

\begin{corollary}
\label{cor:unsolvable}
Under the conditions of Lemma \ref{lemma:maxG_ECstrategy} and given a fixed $\eta>0$, if $L>(1-\frac{1}{\eta})\frac{\varepsilon}{\epsilon_{\mathtt{Q}}}$ then correctness of the QEC-only strategy in solving Problem \ref{problem} cannot be guaranteed even in the limit of infinitely many samples. Also, given a fixed precision and a fixed amount of samples, there must be a maximal $\eta$ (and therefore a maximal $L$) for which Problem \ref{problem} can be guaranteed to be solvable in this device.
\end{corollary}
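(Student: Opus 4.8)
The plan is to read both claims directly off the closed-form ``if and only if'' condition of Lemma \ref{lemma:maxG_ECstrategy}, namely that the QEC-only estimator solves Problem \ref{problem} exactly when $L\,\epsilon_\mathtt{Q}\le\big(1-\tfrac{1}{\gamma}-\tfrac{1}{\eta}\big)\varepsilon$. This inequality has two independent ``knobs'': the sample count, which enters only through $\gamma$ (recall $S_{\text{QEC}}\propto\gamma^2$), and the compilation-error budget, controlled by $\eta$, which in turn fixes the synthesis length $L$. Both parts of the corollary follow from elementary limits in these two parameters, holding everything else fixed.

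For the first part I would fix $\eta$ (hence the compilation, hence $L$) and let the sample count grow. The only dependence of the feasibility bound on the number of samples is through the term $-1/\gamma$, which is monotonically relaxed as $\gamma$ increases; its supremum over $\gamma>1$ is attained in the infinite-sample limit $\gamma\to\infty$, where the statistical error vanishes and the bound becomes $L\,\epsilon_\mathtt{Q}\le\big(1-\tfrac1\eta\big)\varepsilon$. Thus if $L>\big(1-\tfrac1\eta\big)\varepsilon/\epsilon_\mathtt{Q}$ the condition fails for \emph{every} $\gamma$, so the residual (sample-independent) bias exceeds $\varepsilon$ no matter how many shots are collected, proving unsolvability even in the infinite-sample limit.

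For the second part I would instead fix $\varepsilon$ and the sample count---hence fix $\gamma$---and vary $\eta$. The right-hand side $\big(1-\tfrac1\gamma-\tfrac1\eta\big)\varepsilon$ is bounded above by the constant $\big(1-\tfrac1\gamma\big)\varepsilon$ uniformly in $\eta$. On the left, tightening the per-gate compilation error $\varepsilon/(\eta G)$ by increasing $\eta$ forces the synthesis length $L=L(\eta)$ to grow without bound (the Solovay--Kitaev-type scaling $L=c_1\log^{c_2}(1/\epsilon_c)$ quantifies this genuine length--precision tradeoff for a discrete universal set). Hence $L(\eta)\,\epsilon_\mathtt{Q}\to\infty$ while the bound stays finite, so the feasibility inequality is violated for all sufficiently large $\eta$. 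The feasible set $\{\eta:\,L(\eta)\,\epsilon_\mathtt{Q}\le(1-\tfrac1\gamma-\tfrac1\eta)\varepsilon\}$ is therefore bounded above; by continuity of $L(\cdot)$ it is closed and attains a maximum $\eta_{max}$, and monotonicity of $\eta\mapsto L(\eta)$ then yields the associated $L_{max}=L(\eta_{max})$.

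The only non-routine ingredient---more a modelling input than a technical obstacle---is the appeal to compilation theory to assert that $L(\eta)$ is nondecreasing in $\eta$ and diverges as $\eta\to\infty$. For the measure-zero family of gates already generated exactly by $V$ this could fail, with $L$ bounded; the statement should therefore be read in the generic regime where exact-precision synthesis has unbounded cost, which the empirical polylog law of App. \ref{app:c1c2} makes quantitative. Granting this, both conclusions are immediate limit statements on the lemma's inequality.
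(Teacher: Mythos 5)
Your proposal is correct and follows essentially the same route as the paper's own proof: both parts are read off the if-and-only-if condition of Lemma \ref{lemma:maxG_ECstrategy}, taking the infinite-sample limit (where the $-1/\gamma$ term vanishes; the paper writes this limit as ``$\lambda\to\infty$'', evidently a typo for $\gamma\to\infty$) for the first claim, and using unbounded growth of $L(\eta)$ against the eventually-constant right-hand side for the second. Your added caveats---that $L(\eta)\to\infty$ only holds generically (failing for gates exactly synthesizable in $V$) and that attainment of $\eta_{max}$ requires a word about the feasible set---are, if anything, more careful than the paper's terse argument, but they do not constitute a different approach.
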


\begin{proof}
The first conclusion follows directly from Lemma \ref{lemma:maxG_ECstrategy} by taking the limit $\xi\rightarrow\infty$, while the second follows from Lemma \ref{lemma:maxG_ECstrategy} by the fact that $L$ grows as $\eta$ grows regardless of the compilation algorithm, which implies that the left-hand side of the feasibility condition grows without bounds, while the right-hand side is eventually constant.
\end{proof}

The following Lemma shows that another strategy for solving Problem 1 based on standard PEC after compiling the circuit into ideal gates from the universal set $V$ cannot succeed for arbitrary $\varepsilon$. 
\begin{lemma}[Solving Problem 1 with standard PEC]\label{lemma:standardPEC}
In the same setup of Theorem 6, compile each $U_{i}$ into gates from the universal set $V$ up to precision $\epsilon_{c,i}\leq\varepsilon/(2G)$, and denote by $U_c\coloneqq \prod_{i\in[G]} \prod_{\ell\in[L_i]}V_{i_\ell}$ the full compiled circuit, with $L_i=c_1\log^{c_2}(2G/\varepsilon)$. 
Then statistically estimate $\Tr(OU_c^\dagger\rho U_c)$ up to precision $\epsilon_s=\varepsilon/2$ using standard PEC in the basis $\widetilde{B}$, \emph{i.e.}, decompose each unitary channel $\mathcal{V}_{i_\ell}$ as $\mathcal{V}_{i_\ell}=\sum_{j\in[\vert\widetilde{B}\vert]}a_j\widetilde{B}_j$ and sample basis elements from the distribution $p_j\coloneqq\vert a_j\vert/\|a\|_1$. 
This strategy solves Problem 1 only if $\varepsilon\ge2G\,e^{-(c_1G\epsilon_\mathtt{Q})^{-1/c_2}}$.
\end{lemma}

\begin{proof}
The compilation error $\epsilon_c=\varepsilon/(2G)$ for each $U_i$ ensures that $\|\mathcal{U}-\mathcal{U}_c\|_\diamond\le\varepsilon/2$. 
The compiled circuit has a total size $L=\sum_{i\in[G]}L_i=G\,c_1\log^{c_2}(2G/\varepsilon)$. 
By applying standard PEC independently to each ideal $\mathcal{V}_{i_\ell}$ as $\mathcal{V}_{i_\ell}=\sum_{j\in[\vert\widetilde{B}\vert]}a_j\widetilde{B}_j$, the negativity of each term should be trivially controlled assuming $\widetilde{\mathcal{V}}_{i_\ell}$ (the noisy version of the gate $\mathcal{V}_{i_\ell}$ we are trying to expand) is included in the basis, with $\|\mathcal{V}_{i_\ell}-\widetilde{\mathcal{V}}_{i_\ell}\|_\diamond\le\epsilon_\mathtt{Q}$ (see Eq. (1)). 
This is a reasonable assumption, given that the noisy gate itself is one of the simplest implementable operations. 
As a result, the total negativity is $\gamma\le(1+\epsilon_\mathtt{Q})^L\le e^{L\epsilon_\mathtt{Q}}$, which can be made constant, $\gamma\le e$, as long as $L\,\epsilon_\mathtt{Q}\le1$. 
Since $L=G\,c_1\log^{c_2}(2G/\varepsilon)$ is fixed, in terms of $\varepsilon$ this means that the strategy only succeeds for $\varepsilon\ge2G\,e^{-(c_1G\epsilon_\mathtt{Q})^{-1/c_2}}$.
\end{proof}
As an illustration, for a device with $\epsilon_\mathtt{Q}=10^{-5}$ and a circuit with $G=10^2$ two-qubit gates before compilation, using our estimated $c_1=210$ and $c_2=0.75$ from Sec. IV of the SI, Lemma \ref{lemma:standardPEC} gives $\varepsilon\ge0.07$, while CIPEC succeeds for any $\varepsilon$.

\section{Stability analysis} \label{app:stability}

Here we analyze the stability of CIPEC under noise characterization errors. 
Recall that the implementable operations in Eq. \eqref{eq:noisychannels} have the form $\widetilde{\mathcal{A}}_j=\mathcal{E}_j\circ\mathcal{A}_j$ where $\mathcal{E}_j$ are the noise channels affecting the device. 
It is important to notice that we will never be able to perfectly characterize each $\mathcal{E}_j$ -- the best one can hope for are estimates ${\mathcal{E}}^\prime_j$ obtained via gate-set tomography \cite{merkel2013,nielsen2021}. 
Therefore, we must ensure that our method is stable to characterization errors. 
This is proven in the following theorem. 
In particular, solving Problem \ref{problem} with a target precision $\varepsilon$ using CIPEC requires characterizing the error channels to diamond-norm precision $\varepsilon/(2L\norm{O})$ and running CIPEC with a target error $\varepsilon/2$.

\begin{theorem}[Stability of CIPEC]\label{thm:stability}
Let $\langle O\rangle$ be the exact expectation value,  $\overline{O}^{\,\prime},\overline{O}$ be the outputs of Alg. 5 using, respectively, the characterized noise channels $\widetilde{\mathcal{A}}^\prime_j=\mathcal{E}^\prime_j\circ\mathcal{A}_j$ and the true noisy channels $\widetilde{\mathcal{A}}_j=\mathcal{E}_j\circ\mathcal{A}_j$, and $\epsilon_\text{char}\coloneqq \max_{j\in[\vert\mathcal{A}\vert]}\norm{\mathcal{E}^\prime_j-{\mathcal{E}}_j}_{\diamond}$ be the maximum characterization error. 
Then 
\begin{align}
\big\vert \overline{O}^{\,\prime}-\langle O\rangle\big\vert\leq \varepsilon+L\|O\|\epsilon_\text{char}\,.    
\end{align}
\end{theorem}

\begin{proof}
Here for simplicity we denote by $\alpha_i=\mathbf{b}_i\cup\{1\}$ the extended vector of coefficients in Eq. (2) that includes also the unit coefficient of the compilation term. 
Let $p(\alpha)=\prod_{i=1}^Gp_i(\alpha_i)$ be the distribution defined by the coefficients in Alg. 5 \emph{with respect to the perfectly characterized} $\mathcal{E}_j$, and $\prod_{i\in[G]}\widetilde{\mathcal{B}}^i_{\alpha_i}$ be the corresponding sampled circuit, with $\widetilde{\mathcal{B}}^i_{\alpha_i}$ denoting the element of the augmented basis $\widetilde{B}^i$ with coefficient $\alpha_i$ in the decomposition of $U_i$. 
Similarly, denote by $\prod_{i\in[G]}\widetilde{\mathcal{B}}^{\prime\,i}_{\alpha_i}$ the corresponding circuit obtained by replacing each $\mathcal{E}_j$ by the imperfectly characterized noise channel $\mathcal{E}^\prime_j$. 
It follows from repeated use of the diamond norm inequality $\|\mathcal{C}_1\mathcal{C}_0-\mathcal{D}_1\mathcal{D}_0\|_\diamond \le \|\mathcal{C}_0-\mathcal{D}_0\|_\diamond+\|\mathcal{C}_1-\mathcal{D}_1\|_\diamond$ that
\begin{align*}
\norm{\prod_{i\in[G]}\widetilde{\mathcal{B}}^i_{\alpha_i}-\prod_{i\in[G]}\widetilde{\mathcal{B}}^{\prime\,i}_{\alpha_i}}_\diamond 
&\le \sum_{i\in[G]} \norm{\widetilde{\mathcal{B}}^i_{\alpha_i}-\widetilde{\mathcal{B}}^{\prime\,i}_{\alpha_i}}_{\diamond} \\
&\le \sum_{i\in[G]} L_i\max_{j\in[\vert\mathcal{A}\vert]}\norm{\mathcal{E}^\prime_j-{\mathcal{E}}_j}_{\diamond}\\
&= L\,\epsilon_\text{char},
\end{align*}
where the second inequality uses the fact that each $\widetilde{\mathcal{B}}^i_{\alpha_i}$ contains a sequence of either $L_i$ or $D<L_i$ implementable operations $\widetilde{\mathcal{A}}_j=\mathcal{E}_j\circ\mathcal{A}_j$ and $\norm{\mathcal{A}_j}_\diamond=1$. 
As a result,
\begin{align*}
\norm{\sum_{\alpha}p(\alpha)\left(\prod_{i\in[G]}\widetilde{\mathcal{B}}^i_{\alpha_i}-\prod_{i\in[G]}\widetilde{\mathcal{B}}^{\prime\,i}_{\alpha_i}\right)}_\diamond\le L\,\epsilon_\text{char}\,.
\end{align*}
Given that each statistical run outputs a random variable with absolute value at most $\|O\|$, it follows that $\big\vert\overline{O}^{\,\prime}-\overline{O}\big\vert\le L\|O\|\epsilon_\text{char}$. 
Since $\left\vert \overline{O}-\langle O\rangle\right\vert \leq \varepsilon$, the claim follows immediately by triangle inequality,
\begin{align}
\big\vert \overline{O}^{\,\prime}-\langle O\rangle\big\vert\leq \varepsilon+L\|O\|\epsilon_\text{char}\,.    
\end{align}
In particular, $\big\vert \overline{O}^{\,\prime}-\langle O\rangle\big\vert\leq 2\varepsilon$ for $\epsilon_\text{char}\leq\frac{\varepsilon}{L\norm{O}}$. 
\end{proof}
Thus, we see that the protocol is stable with respect to imperfect characterization. Furthermore, if the (logical) noise in the device does not change over time, we only need to perform this characterization once to implement all circuits having a number of gates $G$ and target precision $\varepsilon$ of the same order.

\section{Estimates for $c_1$ and $c_2$}\label{app:c1c2}

\begin{figure*}[!ht]
    \includegraphics[width=\textwidth]{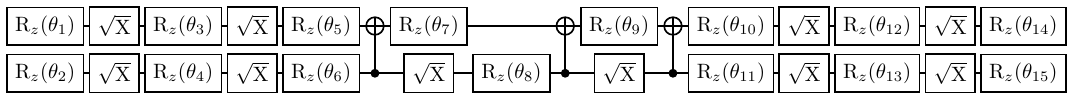}
    \caption{{\bf (Two-qubit gate decomposition used in our Clifford+T compilation)} Decomposition of an arbitrary two-qubit unitary in terms of $15\,R_z(\theta$), $10\,\sqrt{X}$, and $3\,\mathrm{CNOT}$ gates. Each $\sqrt{X}$ is exactly compiled into Clifford+T as $\sqrt{\mathrm{X}}=e^{i\pi/4}\mathrm{S}^\dagger\mathrm{H}\mathrm{S}^\dagger$, while $R_z$ gates are approximately compiled to any target precision using {\sc{gridsynth}} \cite{ross2016}.}
    \label{fig:2q-rz-sx}
\end{figure*}

Here we study the sequence length $L$ required for the Clifford+T compilation of Haar random two-qubit unitaries to a prescribed diamond norm precision $\epsilon_c$. 
Assuming a Solovay-Kitaev-like power-log scaling for the average case sequence length, 
\begin{align}\label{eq:skscaling}
 L = c_1 \log^{c_2}(1/\epsilon_c),
\end{align}
the goal is to obtain numerical estimates for the constants $c_1$ and $c_2$. 
To the best of our knowledge, there is no straightforward algorithm to synthesize arbitrary two-qubit unitaries using the Clifford+T gateset with an end-to-end scaling analysis in diamond norm error. 
For that reason, we adopt a heuristic strategy described below.

We draw random unitaries $U_\text{Haar}$ from the Haar distribution and, for each of them, proceed as follows: 
$i)$ decompose it in terms of $15\,R_z$, $10\,\sqrt{X}$, and $3\,\mathrm{CNOT}$ gates as illustrated in Fig. \fig{2q-rz-sx} using the {\sc qiskit} transpiler \cite{qiskit2024}; 
$ii)$ sample a target error upper bound $\epsilon^\prime_c$ from the uniform distribution in the interval $\big[\log\log10^{-2},\log\log10^{-9}\big]$;
$iii)$ synthesize each $\mathrm{R}_z$ into Clifford+T gates to precision $\epsilon^\prime_c/15$ using the {\sc gridsynth} algorithm \cite{ross2016} and each $\sqrt{\mathrm{X}}$ by its exact Clifford decomposition $\sqrt{\mathrm{X}}=e^{i\pi/4}\mathrm{S}^\dagger\mathrm{H}\mathrm{S}^\dagger$;
$iv)$ reduce the resulting word using trivial Clifford+T identities (\emph{e.g.}, $\mathrm{S}^\dagger\mathrm{S}=\mathrm{S}\mathrm{S}^\dagger=\openone$) and record the word length $L$ as well as the resulting diamond norm compilation error $\epsilon_c\le\epsilon^\prime_c$ returned by {\sc gridsynth}. 
Since {\sc gridsynth} is non-deterministic, we repeat steps $iii)$ and $iv)$ above $50$ times and keep only the pair $(L,\epsilon_c)$ corresponding to the smallest $L$ (i.e., best-case synthesis of $U_\text{Haar}$). 
Fig. \fig{c1c2scaling} shows a scatter plot of $L$ vs. $\log(1/\epsilon_c)$ (in log-log scale) for $10^4$ Haar random unitaries, from which the constants $c_1$ and $c_2$ (for the average case) can be estimated by fitting a linear model, from which we obtain $c_1=210.36$ and $c_2=0.75$.

\begin{figure}[!h]
    \includegraphics[width=\columnwidth]{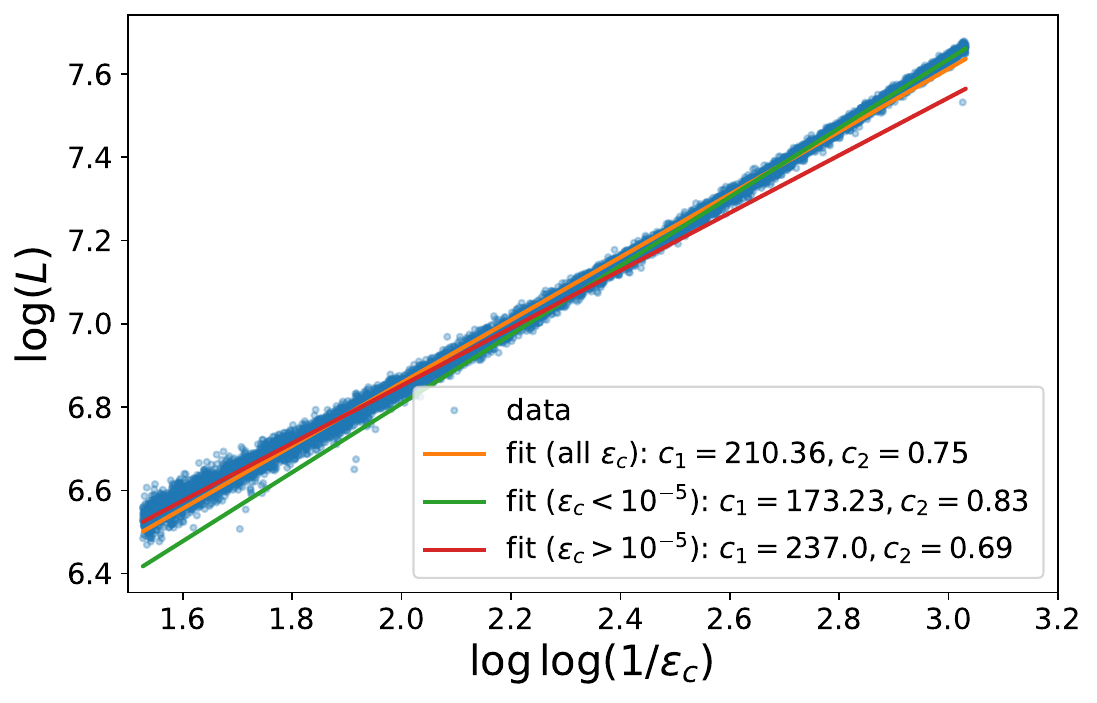}
    \caption{{\bf (Clifford+T compilation scaling)}. The points show the total number of gates $L$ (in $\log$-scale) \emph{vs.} the inverse compilation error $\epsilon_c$ (in $\log\log$-scale) for $10^4$ Haar random 2-qubit unitaries $U_\text{Haar}$. Each $U_\text{Haar}$ is first decomposed as in Fig. \ref{fig:2q-rz-sx} and then synthesized into Clifford+T gates using {\sc gridsynth} with a target compilation error uniformly sampled (in log-log scale) from $[10^{-9},10^{-2}]$. Assuming the Solovay-Kitaev-like polylogarithmic scaling \eqref{eq:skscaling} in the average case, we determine the constants $c_1,c_2$ through a linear fit (orange curve). 
    For comparison, we show how these values change when the fit is restricted to the asymptotic regime (green curve) or to the low precision regime (red curve).}
    \label{fig:c1c2scaling}
\end{figure}

\section{Quantum circuits for Jones polynomial estimation}
\label{app:jones}

Here we briefly review some basic properties of knots and the control-free Hadamard test circuits introduced in \cite{laakkonen_less_2025} for estimating the Jones polynomial \cite{Kauffman2001} of a knot on a quantum computer. 

Recall that any knot can be represented as the closure of some braid $\Sigma$ \cite{Kauffman2001,Melnikov2024} on a number $n_s$ of strands. 
Any such braid can be expressed as a product of braid generators $\sigma_i^{\pm1}$ acting on the $(i,i+1)$ pair of strands, where $i\in[n_s-1]$ and the number $\vert \Sigma\vert$ of terms in the product gives the total number of braid crossings. 
These braids can be closed in multiple ways, the most used ones being the so-called Markov and plat closures, $\text{M}(\Sigma)$ and $\text{P}(\Sigma)$ respectively \cite{Birman1976,Melnikov2024}. 
The former exists for any number of strands and is constructed by connecting the endpoints of each strand in $\Sigma$ through simple non-crossing arcs; the latter, on the other hand, requires a braid $\Sigma$ with an even number of strands and is obtained by connecting pairs of endpoints sequentially (see Fig. 1 in \cite{laakkonen_less_2025} for an illustration). 
For instance, the trefoil knot illustrated in Fig. \fig{jones} can be represented either as $\text{M}(\Sigma)$ for the $2$-strand braid $\Sigma=\sigma_1^3$ or as $\text{P}(\Sigma)$ for the $4$-strand braid $\Sigma=\sigma_1\,\sigma_2^3\,\sigma_1^{-1}$. 

The Jones polynomial of a knot (or link) $K$, denoted $J_K(q)$, is a polynomial over a complex variable $q$ with integer coefficients determined solely by the topology of $K$. 
We focus on the estimation of $J_K(q)$ at the special point $q=e^{2\pi i/5}$ to relative precision $\varepsilon$. 
For $K=\text{M}(\Sigma)$ and $K=\text{P}(\Sigma)$ this problem is known to be DQC1-complete \cite{shor_estimating_2008} and BQP-complete \cite{Aharonov2009}, respectively, for $\varepsilon=1/\mathcal{O}(\text{poly}(n_s))$ and $\vert \Sigma\vert=\mathcal{O}(\text{poly}(n_s))$. 
For simplicity, here we focus on the BQP-complete case. 
As shown in \cite{laakkonen_less_2025}, the Jones polynomial can be expressed as 
\begin{align}\label{eq:jones}
J_{\text{P}(\Sigma)}\big(e^{i \frac{2\pi}{5}}\big) = 
\big(-e^{-i\frac{3\pi}{5}}\big)^{3\omega_\Sigma}\phi^{\frac{n_s}{2}-1}\expval{s\vert U_\Sigma\vert s}\,,
\end{align}
where $\phi\coloneq\frac{1}{2}(1+\sqrt{5})$ is the golden ratio, $\omega_\Sigma\coloneq\sum_{\sigma\in \Sigma}\text{sign}(\sigma)$ is called the writhe of the braid $\Sigma$, $\ket{s}\coloneq \ket{0101\cdots010}$ is $(n_s+1)$-qubit a computational basis state, and $U_\Sigma$ is the Fibonacci unitary representation of the braid $\Sigma$ \cite{kauffman_fibonacci_2008} (see Fig. 4 of \cite{laakkonen_less_2025} for a specific realization in terms of single- and two-qubit gates). 
We use the quantum algorithm proposed in \cite{laakkonen_less_2025}, which is based on a Monte Carlo estimation of the matrix element $\expval{s\vert U_\Sigma\vert s}$ in Eq. \eqref{eq:jones} using a control-free Hadamard test. 
The associated circuit $U$ (see Fig. 5 of Ref. \cite{laakkonen_less_2025}) contains $3\vert \Sigma\vert$ parametrized two-qubit gates, $\mathcal{O}(n)$ controlled-NOT gates, and $\mathcal{O}(\vert \Sigma\vert)$ single-qubit rotations.  
This falls within the scope of Problem \ref{problem} with $U\rho \,U^\dagger=\ketbra{s}{s}$, $G=\mathcal{O}(\vert\Sigma\vert, n)$ two-qubit gates, and $O=\frac{1}{2}(U_\Sigma^{\phantom{\dagger}}+U_\Sigma^{{\dagger}})$ or $O=\frac{1}{2i}(U_\Sigma^{\phantom{\dagger}}-U_\Sigma^{{\dagger}})$ for the real and imaginary parts of $\expval{s\vert U_\Sigma\vert s}$, respectively. 
After estimating $\expval{s\vert U_\Sigma\vert s}$, the Jones polynomial estimate is obtained by post-processing the result by the factor in Eq. \eqref{eq:jones}. 
Notice that, since each of these gates needs to be compiled into a sequence of $L_i=\mathcal{O}\left(\log^{c_2}(G)\right)$ universal gates from the set $V$, the total circuit size is $L=\mathcal{O}\!\left(G\log^{c_2} G\right)=\mathcal{O}\big((\vert\Sigma\vert+n)\log^{c_2}(\vert\Sigma\vert+n)\big)$. 

In the example of the trefoil knot reported in Fig. \ref{fig:jones}, we group single- and two-qubit gates acting on the same target qubits into a new two-qubit gate $U_i$, so that the resulting circuit $U$ is composed of only $G=9$ such $U_i$ acting on distinct qubit pairs.

\end{document}